\newcommand{\be}{\begin{equation}}
\newcommand{\ee}{\end{equation}}
\newcommand{\ba}{\begin{eqnarray}}
\newcommand{\ea}{\end{eqnarray}}
\newcommand{\bi}{\begin{itemize}}
\newcommand{\ei}{\end{itemize}}
\newcommand{\bn}{\begin{enumerate}}
\newcommand{\en}{\end{enumerate}}
\newcommand{\bbm}{\begin{bmatrix}}
\newcommand{\ebm}{\end{bmatrix}}
\newcommand{\bp}{\begin{proof}}
\newcommand{\ep}{\end{proof}}
\newcommand{\nn}{\nonumber}
\newcommand{\mr}{\ensuremath{\mathrm}}
\newcommand{\mc}{\ensuremath{\mathcal}}
\newcommand{\ov}{\ensuremath{\overline}}
\newcommand{\sm}{\ensuremath{\setminus}}
\newcommand{\ga}{\ensuremath{\gamma}}
\newcommand{\Om}{\ensuremath{\Omega}}
\newcommand{\La}{\ensuremath{\Lambda }}
\newcommand{\om}{\ensuremath{\omega}}
\def\C{\mathbb{C}}
\def\Z{\mathbb{Z}}
\def\N{\mathbb{N}}
\def\B{\mathcal{B}}
\renewcommand{\H}{\ensuremath{\mathcal{H} }}
\newcommand{\K}{\ensuremath{\mathcal{K} }}
\renewcommand{\L}{\ensuremath{\mathscr{L} }}
\newcommand{\A}{\ensuremath{\mathcal{A} }}
\newcommand{\tr}{\ensuremath{\mathrm{tr} }}
\newcommand{\trn}{\ensuremath{\mathrm{tr} _n }}
\newcommand{\tra}{\ensuremath{\mathrm{tr} _{\mathcal{A}} }}
\newcommand{\ptr}{\ensuremath{\mathrm{tr} _{\mathcal{B}} ^{\mathcal{A}} }}
\newcommand{\cc}{\ensuremath{\overline{\mathrm{cc}}  }}
\newcommand{\ip}[2]{\ensuremath{\langle {#1} , {#2} \rangle}}
\newcommand{\ran}[1]{\ensuremath{\mathrm{Ran} \left( {#1} \right) }}
\renewcommand{\ker}[1]{\ensuremath{\mathrm{Ker} ({#1}) }}
\numberwithin{equation}{section}
\numberwithin{subsection}{section}
\newtheorem{thm}[section]{Theorem}
\newtheorem*{thm*}{Theorem}
\newtheorem{lemma}[section]{Lemma}
\newtheorem{cor}[section]{Corollary}
\theoremstyle{definition}
\newtheorem{defn}[section]{Definition}
\newtheorem{remark}[section]{Remark}
\newtheorem{eg}[section]{Example}
\title{Matrix $N-$dilations of quantum channels}
\author{J. Levick}
\address{University of Cape Town}
\email{jlevick@gmail.com}
\author{R.T.W. Martin}
\address{University of Cape Town}
\email{rtwmartin@gmail.com}
\begin{document}
\maketitle
\onehalfspace

\begin{abstract}
We study unital quantum channels which are obtained via partial trace of a $*$-automorphism of a finite unital matrix $*$-algebra.  We prove that any such channel, $q$, on a unital matrix $*$-algebra, $\A$, admits a finite matrix $N-$dilation, $\alpha _N$, for any $N \in \N$. Namely, $\alpha _N$ is a $*$-automorphism of a larger bi-partite matrix algebra $\A \otimes \B$ so that partial trace of $M$-fold self-compositions of $\alpha _N$ yield the $M$-fold self-compositions of the original quantum channel, for any $1\leq M \leq N$. This demonstrates that repeated applications of the channel can be viewed as $*$-automorphic time evolution of a larger finite quantum system.
\end{abstract}

\section{Introduction}

A unital quantum channel $q$ is a unital, completely positive, trace-preserving (UCPTP) linear map. Quantum channels are fundamental objects in quantum computing and quantum information theory where they naturally describe the time evolution of open quantum systems \cite[Chapter 8.2]{NC}.  Given an open quantum system, $\A$, interacting with the quantum system of the environment $\mc{E}$, the total quantum system $$ \A \otimes \mc{E}, $$ is a closed quantum system, and its time evolution is necessarily unitary, \emph{i.e.} described by a (unitary) $*$-automorphism of $\A \otimes \mc{E}$ \cite[Chapter 2.2.2]{NC}. In this paper, we consider quantum channels acting on a finite quantum system, \emph{i.e.} a unital matrix $*$-algebra, $\A$, such that $q$ can be realized as the partial trace of a $*$-automorphism of a larger finite quantum system $\A \otimes \mc{B}$:
$$ q ( A ) \otimes I_{\mc{B}} = \left( \mr{id} _{\A} \otimes I _{\mc{B} } \tr _{\mc{B}} \right) \circ \alpha (A \otimes I_{\mc{B}}) ; \quad \quad  \alpha \in \mr{Aut} (\A \otimes \B).$$ (In the above, and throughout, by a unital matrix $*$-algebra, we mean a unital, self-adjoint matrix algebra.) Any quantum channel with this property is \emph{factorizable} in the sense of \cite{CAD-ergodic,Hup-factor}, and we will review the general definition of factorizability and its relationship to dilation theory in the upcoming Subsection \ref{DilFact}. 

Our main result is that if a unital quantum channel has this property, then for any $N \in \N$, one can construct $*$-automorphisms, $\alpha _N $ of larger finite quantum systems (finite unital matrix $*$-algebras) $\A \otimes \B _N$ so that for any $1 \leq M \leq N$,
$$ q^{(M)} (A) \otimes I_{\B _N} := ( \underbrace{ q \circ q \circ \cdots \circ q }_{M \ \mbox{times}}) (A) \otimes I_{\B _N} = \left( (\mr{id} _\A \otimes I_{\B _N} \tr _{\B_N} )  \circ \alpha _N ^{(M)} \right)  \left(A \otimes I_{\B _N } \right). $$ A physical interpretation is that time evolution of any state or \emph{density matrix} (any positive matrix with unit trace) $(A \otimes I_{\B _N } )$ in the original quantum system $\A \otimes I_{\B _N}$, under the quantum channel, $q$, is implemented by $*$-automorphic time evolution of the larger (finite) quantum system $\A \otimes \B$, up to $M$ `time steps'.

\subsection{Preliminaries}

Let $QC (\A)$ denote the convex set of all unital quantum channels on $\A$, a finite unital matrix $*$-algebra, $\A := \bigoplus _{k=1} ^N \C ^{n _k \times n_k}$, where $\C ^{n \times n}$ denotes the $n\times n$ complex matrices. The normalized trace on $\C ^{n\times n}$ will be denoted by $\trn$ and we assume that there is a fixed, unital faithful trace $\tr _\A$ so that $q$ is $\tr _\A$-preserving: $\tr_\A \circ q = \tr _\A$. If $\{ q_k \} _{k=1} ^p \subset \A$, are a set of contractions so that
$$ q(A) = \sum _{k =1} ^p q_k A q_k ^*, $$ then the $\{ q_k \}$ are called \emph{quantum effects} or \emph{Kraus operators} for $q$ and we write
$$ q \sim \{ q_k \}. $$ It is an easy consequence of Choi's theorem that any $q \in QC (\A)$ is implemented by a set of quantum effects, $q \sim \{ q_k \}$ \cite{Choi}. Moreover, if $\{ q _k \} _{k=1} ^p$, and $\{ q_j ' \} _{j=1} ^{p'}$ are two sets of Kraus operators for $q$, then there is a unitary $U \in \C ^{m \times m}$,
$m = \max \{p, p' \}$, so that (assume without loss of generality that $p <p'$):
$$ U \bbm q _1 \\ \vdots \\ q_p \\ 0 \\ \vdots \\ 0 \ebm = \bbm q_1 ' \\ \vdots \\ q'_{p'} \ebm. $$ We assume throughout, without loss in generality, that $q \sim \{ q_k \} _{k=1} ^p$ is implemented by a linearly independent set of Kraus operators so that $p \in \N$ is minimal. Any unital quantum channel, $q \in QC (\A )$ has a \emph{tracial dual}: $q^\dag \in QC (\A)$. The fact that $q \in QC (\A )$ is unital and trace-preserving implies:
$$ \sum _{k=1} ^p q_k q_k ^* = I _\A ; \quad \quad q \ \mbox{is unital}, $$ and
$$ \sum _{k=1} ^p q_k ^* q_k = I _\A ; \quad \quad q \ \mbox{is trace-preserving}. $$
The tracial dual channel $q^\dag \in QC (\A )$ is defined by
$$ \tra \left( A_1 q^\dag (A_2 ) \right) := \tra \left( q (A_1 ) A _2 \right).$$ This is again a unital quantum channel with effects $q^\dag \sim \{ q _k ^* \}$.

Given $q  \in QC (\A)$ with $q \sim \{ q_k \} _{k=1} ^p$, we say that $U$ is a \emph{unitary matrix factorization} of $q$ if there are $p$ matrices $\{ v_k \} _{k=1} ^p $ so that
$$ U := \sum _{k=1} ^p q_k \otimes v_k; \quad \quad \mbox{is unitary}, $$ and the $v_k$ are trace-orthogonal. That is, we can assume without loss in generality that the $v_k$ all belong to a finite unital matrix $*$-algebra, $\B$, and,
$$ \tr _\B (v_k ^* v_j ) = \delta _{k,j}. $$ If $q \in QC (\A )$ has a unitary matrix factorization, we will say that $q$ is \emph{matrix factorizable}. 
\begin{remark} We note that the set of $q\in QC(\A)$ that are matrix factorizable is closed under convex combinations as well as under composition; a fact that can be proved with only slight modifications to the proof of \cite[Proposition 2]{ricard}: convex combinations of matrix factorizable channels are factorizable by matrix algebras that are direct sums of the respective algebras, and compositions of matrix factorizable channels are factorizable by matrix algebras that are tensor products of the respective algebras.
\end{remark}

\begin{defn} \label{Ndilation}
We say that a $*$-automorphism, $\alpha$, of $ \A \otimes \B$, where $\A , \B$ are finite unital matrix $*$-algebras with fixed faithful, unital traces $\tr _\A, \tr _\B$, is a \emph{$*$-automorphic matrix $N$-dilation} or (more simply) a \emph{matrix $N$-dilation} of $q \in QC (\A )$ if, for any $1 \leq M \leq N$,
$$ q^{(M)} (A) \otimes I _\B  =  \left( \Phi _\B ^\A \circ \alpha ^{(M)} \right) (A \otimes I_\B ) \quad \quad \forall A \in \A. $$
\end{defn}

In the above, $\Phi _\B ^\A$ denotes the unique $\tr _{\A \otimes \B } := \tr _{\A} \otimes \tr _{\B}$-preserving conditional expectation onto the unital $*$-subalgebra $\A \otimes I_\B$. Namely, $\tr _{\A \otimes \B}$ is defined by
$$ \tr _{\A \otimes \B } (A \otimes B ) := \tr_\A (A) \cdot \tr_\B (B); \quad \quad A \in \A, B \in \B, $$ the \emph{partial trace} of $\A \otimes \B$ onto $\A$ is defined as the map $\tr _\B ^\A = \mr{id} _\A \otimes \tr _\B : \A \otimes \B \rightarrow \A$:
$$ \tr _\B ^\A (A \otimes B) = \left( \mr{id} _\A \otimes \tr _\B \right)  (A \otimes B ) := A \cdot \tr _\B (B) \in \A, $$ and the unital $\tr _{\A \otimes \B }$-preserving conditional expectation $\Phi _\B ^\A$ is then
$$ \Phi _\B ^\A := (\mr{id} _\A \otimes \tr _\B ) ^\dag (\mr{id} _\A \otimes \tr _B ) : \A \otimes \B \rightarrow \A \otimes I _\B, $$
$$ \Phi _\B ^\A \left( A \otimes B \right) = A \cdot \tr _\B (B) \otimes I_\B, $$ where $\dag$ denotes tracial dual.

\begin{lemma} \label{onedilation}
 If $\A, \B$ are finite unital matrix $*$-algebras and $q \in QC (\A)$, then $U \in \A \otimes \B$ is a unitary matrix factorization of $q$ if and only if $\mr{Ad} _U$ is a matrix $1$-dilation of $q$.
\end{lemma}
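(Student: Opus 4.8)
The plan is to verify both implications by direct computation with the partial trace, using the trace-orthogonality of the auxiliary matrices $v_k$ as the crucial ingredient. Since $N = 1$ here, Definition \ref{Ndilation} only asks for the single identity $q(A) \otimes I_\B = (\Phi_\B^\A \circ \mr{Ad}_U)(A \otimes I_\B)$ for all $A \in \A$, where $\mr{Ad}_U(X) = U X U^*$.

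First I would establish the forward direction. Assume $U = \sum_{k=1}^p q_k \otimes v_k$ is a unitary matrix factorization of $q$, so the $v_k \in \B$ satisfy $\tr_\B(v_k^* v_j) = \delta_{k,j}$ and $q \sim \{q_k\}$. Then compute
$$ \mr{Ad}_U (A \otimes I_\B) = U (A \otimes I_\B) U^* = \sum_{k,j=1}^p q_k A q_j^* \otimes v_k v_j^*. $$
Applying the partial trace $\tr_\B^\A = \mr{id}_\A \otimes \tr_\B$ and then tensoring with $I_\B$ gives
$$ (\Phi_\B^\A \circ \mr{Ad}_U)(A \otimes I_\B) = \sum_{k,j=1}^p q_k A q_j^* \, \tr_\B(v_k v_j^*) \otimes I_\B. $$
The point now is that trace-orthogonality $\tr_\B(v_k^* v_j) = \delta_{k,j}$ must be converted into a statement about $\tr_\B(v_k v_j^*)$; since $\tr_\B$ is a trace, $\tr_\B(v_k v_j^*) = \tr_\B(v_j^* v_k) = \delta_{k,j}$ as well. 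Hence only the diagonal terms survive and the sum collapses to $\sum_k q_k A q_k^* \otimes I_\B = q(A) \otimes I_\B$, which is exactly the matrix $1$-dilation condition. This is the main (and only real) obstacle, and it is mild: one must be careful that the hypothesis is phrased with $v_k^* v_j$ while the computation naturally produces $v_k v_j^*$, and reconcile them via the trace property.

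For the converse, suppose $\mr{Ad}_U$ is a matrix $1$-dilation of $q$ for some unitary $U \in \A \otimes \B$. Expand $U$ in a basis: choose any trace-orthonormal basis $\{v_k\}$ of $\B$ and write $U = \sum_k a_k \otimes v_k$ with $a_k \in \A$. Running the same computation as above yields $(\Phi_\B^\A \circ \mr{Ad}_U)(A \otimes I_\B) = \sum_{k} a_k A a_k^* \otimes I_\B$, so the dilation hypothesis forces $\sum_k a_k A a_k^* = q(A)$ for all $A$, i.e. $\{a_k\}$ is a set of Kraus operators for $q$. By the uniqueness-up-to-unitary statement for Kraus decompositions recalled in the Preliminaries, the $a_k$ are related to the minimal system $\{q_k\}_{k=1}^p$ by an isometry with orthonormal columns, and absorbing that isometry into a change of the trace-orthonormal family $\{v_k\}$ (which preserves trace-orthonormality) rewrites $U = \sum_{k=1}^p q_k \otimes \tilde v_k$ with the $\tilde v_k$ still trace-orthonormal in $\B$. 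Thus $U$ is a unitary matrix factorization of $q$. I would close by remarking that unitarity of $U$ is used implicitly on the converse side only through the consistency of the Kraus relations (and is part of the hypothesis in both directions), so no extra argument is needed there.
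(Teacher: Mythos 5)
Your proof is correct. The paper itself does not write out an argument for this lemma: it only remarks that it is a special case of Haagerup--Musat's Theorem 2.2 and ``easily verified,'' so what you have supplied is exactly the elementary verification that the citation hides. Your forward direction is the same partial-trace computation that reappears inside the paper's proof of Theorem \ref{main} (the evaluation of the expression labelled (A) there), and your care in converting the hypothesis $\tr_\B (v_k^* v_j) = \delta_{k,j}$ into $\tr_\B (v_k v_j^*) = \delta_{k,j}$ via the trace property is the right (mild) point to flag. The converse is the only place where more than computation is needed, and you handle it correctly: expanding $U = \sum_k a_k \otimes v_k$ over a trace-orthonormal basis of $\B$ and applying the same computation shows the coefficients $a_k$ are a Kraus family for $q$, and the uniqueness-up-to-isometry of Kraus decompositions recalled in the Preliminaries gives $a_k = \sum_j W_{kj} q_j$ with $W^*W = I_p$, so that $U = \sum_{j=1}^p q_j \otimes \tilde v_j$ with $\tilde v_j := \sum_k W_{kj} v_k$ still trace-orthonormal (orthonormality is preserved precisely because the columns of $W$ are orthonormal). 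This mirrors the expansion-in-matrix-units argument the paper gives in the remark following the lemma. The only half-sentence worth adding is that unitarity of $U$ is what makes $\mathrm{Ad}_U$ a $*$-automorphism (and trace-preserving), as Definition \ref{Ndilation} requires; this is immediate, so there is no gap.
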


In the above, $\mr{Ad} _U$ denotes the unitary $*$-automorphism of \emph{adjunction by U}: $$\mr{Ad} _U (A\otimes B) := U (A \otimes B) U^*, $$ for some unitary $U \in \A \otimes \B$.

\begin{proof}
    This is a special case of \cite[Theorem 2.2]{Hup-factor}, and easily verified.
\end{proof}

Observe that if $q$ admits a unitary matrix factorization, then the $v_k$ can all be chosen to be matrix units for the unital matrix algebra $\B$. Indeed, if the matrix units for $\B$ are $\{E_k \} _{k=1} ^m$, and $U = \sum _{k=1} ^p q_k \otimes v_k$ is a unitary matrix factorization of $q$, then expanding the $v_k$ in the trace-orthogonal basis $\{ E_k \}$ yields:
$$ v_k = \sum _{j=1} ^m W_{k,j} E_j, $$ and so
\ba U & = & \sum _{k=1} ^p q_k \otimes v_k \nn \\
& = & \sum _{k=1} ^p \sum _{j=1} ^m  W_{k,j} q_k \otimes E_j \nn \\
& = & \sum _{j=1} ^m  \left( \sum _{k=1} ^p c_{k,j} q_k \right) \otimes E_j \nn \\
& =: & \sum _{j=1} ^m q_j ' \otimes E_j. \nn \ea
Since the partial trace of $\mr{Ad} _U$ onto the unital subalgebra $\A \otimes I_\B$ yields $q$, it follows that the $\{q _j ' \} _{j=1} ^m \sim q$ are necessarily another set of effects or Kraus operators for $q$.

\begin{remark}
It follows that all matrix factorizable quantum channels $q \in QC (n):= QC (\C ^{n\times n})$ can be constructed as follows: Fix a unitary matrix $U \in \C ^{nm \times nm} \simeq \C ^{n\times n} \otimes \C ^{m\times m}$, and decompose $U$ into $m^2$ blocks of size $n \times n$:
$$ U = \sum _{k,j =1} ^m q_{k,j} \otimes E_{k,j}; \quad \quad q_{k,j} \in \C ^{n\times n}, $$ where the $E_{k,j}$ are matrix units for $\C ^{m \times m}$. Applying the partial trace of $\C ^{nm \times nm} \simeq \C ^{n\times n} \otimes \C ^{m \times m}$ onto $\C ^{n\times n} \otimes I_m$ to $\mr{Ad} _U$ then yields a unital quantum channel $q \in QC (n)$ so that $ q \sim \{ q_{k,j} \} _{k,j =1} ^m$. The $q_{k,j} \in \C ^{n \times n}$ are a set of $m^2$ effects for $q$, $q \sim \{ q _{k,j} \} _{k,j = 1} ^m$ and $U$ is then a unitary matrix factorization of $q$. Equivalently, by Lemma \ref{onedilation}, $\mr{Ad} _U$ is a matrix $1$-dilation of $q$.
\end{remark}

\begin{eg}{ (Discrete Fourier Transform)}

Let $\om := e^{-i\frac{2\pi}{N}}$ be a primitive $N^{th}$ root of unity, and set
$$[\Om _{kj} ] := \frac{1}{\sqrt{N}} [ \om ^{(k-1)(j-1)}] \in \C ^{N \times N}; \quad \quad 1 \leq k,j, \leq N. $$
That is,
$$ \Om = \frac{1}{\sqrt{N}} \bbm 1 & 1 & 1 & 1 & \cdots & 1  \\
1 & \om & \om ^2 & \om ^3 & \cdots & \om ^{N-1} \\
1 & \om ^2 & \om ^4 & \om ^ 6 &  \cdots & \om ^{2(N-1)} \\
1 & \om ^3 & \om ^6 & \om ^9 & \cdots & \om ^{3(N-1)} \\
\vdots & & & & \ddots & \vdots \\
1 & \om ^{N-1} & \om ^{2(N-1)} & \cdots & & \om ^{(N-1) ^2} \ebm. $$ This is the $N$-point Discrete Fourier Transform (DFT) matrix, and it is unitary.
Given any $n \in \N$, if we choose $N := n \cdot m$, for some $m \in \N$, then we can break $\Om$ into $m^2$ blocks of matrices in $\C ^{n\times n}$ as:
$$ \Om = \sum _{k,j}  \Om _{k,j} \otimes E_{k,j}; $$ where the $E_{k,j}$ are the standard matrix units for $\C ^{m\times m}$.
It follows that the unital quantum channel $q \in QC (n)$ defined by $q \sim \{ \Om _{k,j} \}$ has $\Om$ as a unitary matrix factorization.
\end{eg}

\begin{eg}{ (Random unitary channels)}
Choose numbers $\{ p_1, ... , p_N \} >0$ so that $$ \sum _{k=1} ^N p_k = 1, $$ and let $\{ U_k \} _{k=1} ^p \in \C ^{n\times n}$ be any unitary matrices.
Let $\B := \C^N$ denote the diagonal $*$-algebra of $N\times N$ diagonal matrices with faithful normal tracial state:
$$ \tr _\B ( \mr{diag} (b_1 , ... , b_N ) ) := \sum _{k=1} ^N p_k b_k, $$ and let $\{ E _k \} _{k=1} ^N$ be diagonal matrix units for this algebra.
It is easy to check that
$$ q(A) := \sum _{k=1} ^N p_k U_k A U_k^*, $$ is a unital quantum channel on $\A = \C ^{n\times n}$, and that
$$ V := \sum _{k=1} ^N U_k \otimes E_k; \quad \quad [V] = \bbm U_1 & &  & \\
 & U_2 & & \\ & & \ddots &  \\ & & & U_N\ebm$$ is a unitary matrix factorization of $q$. (In the above blank entries are all zero, and $[V]$ denotes the matrix representation of $V$ in the canonical basis.)
\end{eg}

\begin{eg}{ (Schur product channels)} \label{Schureg}
The Schur product of two matrices $A$ and $B$ whose dimensions are the same, denoted $A\circ B$, is simply their entry-wise product: $(A\circ B)_{ij} = A_{ij}B_{ij}$. A Schur product channel is a quantum channel $q:\C^{n\times n}\rightarrow \C^{n\times n}$ that outputs the Schur product of each input matrix with some fixed output matrix:

$$q(X) = X\circ C$$ for some fixed $C$.
A channel of this form is completely positive whenever $C \geq 0$; to see this, note that for $v\in \C^n$ and $D_v = \mathrm{diag}(v)$, $X\circ (vv^*)=D_v X D_v^*$. If $C\geq 0$, then there exist $\{v_i\}_{i=1}^p$ such that $C= \sum_{i=1}^p v_iv_i^*$, and so
$$q(X) = X\circ C = X \circ (\sum_{i=1}^p v_iv_i^*) = \sum_{i=1}^p D_{v_i}X D_{v_i}^*;$$
hence $q \sim \{D_{v_i}\}_{i=1}^p$ is an operator-sum form for $q$. In general, Schur product channels correspond to channels whose Kraus operators are diagonal.\\
If $C\geq 0$, the map $q(X) = X\circ C$ is trace-preserving if and only if $\sum_{i=1}^n x_{ii}c_{ii} = \sum_{i=1}^n x_{ii}$ for all $X$; i.e., if and only if $c_{ii} = 1$. Thus, $C$ is a correlation matrix: a positive semidefinite matrix with $1$'s down the diagonal. In this case, $q$ is automatically unital as well as trace-preserving.\\
From the above, it is clear that all matrix-factorizable Schur product channels arise in the following way:
$$q(X) = \tr_B U(X\otimes I)U^*$$ for some unitary $U\in \C^{n\times n}\otimes \mathcal{N}$. If we write $U = \sum_{k} u_k\otimes E_k$ where $E_k$ are matrix units for $\mathcal{N}$, then for $q$ to be a Schur product channel, $u_k$ must all be diagonal.\\
Factorizability of Schur product channels is intimately connected to the geometry of the convex set $\mathcal{E}_n:=\{C \in \C^{n\times n}: C\geq 0, c_{ii} =1\}$. First of all, rank-one correlation matrices induce unitary Schur product channels. This is because, if $C = vv^*$, and $c_{ii}=1$, necessarily $|v_i|=1$; hence $q(X) = X\circ C = D_vX D_v^*$ where $D_v$ is a diagonal matrix whose diagonal entries all have modulus-$1$, so $D_v$ is unitary.\\
This means that if a correlation matrix $C$ is in the convex hull of rank-one correlation matrices, the channel $q(C)$ is random unitary. For $n=1,2,3$ rank-one correlation matrices are the only extreme points of the set $\mathcal{E}_n$, and so all Schur product channels on the $n\times n$ matrices for $n\leq 3$ are random unitary, hence factorizable.\\
For $n\geq 4$, there are rank-$k$ extreme points of $\mathcal{E}_n$ for all $k\leq \sqrt{n}$ \cite{Li-xtrm}. If $C$ is an extreme point of rank $\geq 2$ of $\mathcal{E}_n$, then the channel $q(X) = X\circ C$ is not factorizable \cite{Hup-factor} \cite{dykema}. \\
It is also possible to have a Schur product channel that is factorizable, but not random unitary (i.e., $C$ is not in the convex hull of rank-$1$ correlation matrices). Haagerup and Musat exhibit such an example for $n=6$.\\
More generally, Haagerup and Musat have shown that a correlation matrix $C=(c_{ij})\in \mathcal{E}_n$ is factorizable if and only if there exists a finite von Neumann algebra $\mathcal{N}$ and unitaries $\{U_i\}_{i=1}^n$ in $\mathcal{N}$ such that $\mathrm{Tr}_{\mathcal{N}}(U_i^*U_j) = c_{ij}$. The question of whether the closure of the set of matrix factorizable correlation matrices is the same as the set of factorizable correlation matrices is equivalent to Connes' embedding conjecture \cite{dykema}. \\
One wide class of Schur product channels that are known to be factorizable is the set of Schur product channels arising from correlation matrices with all real entries \cite{ricard}\cite{Hup-factor}. Such a channel always admits a factorization by means of trace-orthogonal, anti-commuting unitaries. \\

\end{eg}

\subsection{Dilation and Factorization} \label{DilFact}

The notions of dilation and factorization of quantum channels were originally introduced in the more general context of what are called \emph{Markov maps} between finite von Neumann algebras. If $(\A , \phi ); \ (\B , \psi )$ are two finite von Neumann algebras equipped with faithful normal states, $\phi , \psi$, a unital completely positive map $q : \A \rightarrow \B$ is called a $(\phi , \psi)$-Markov map if $\psi \circ q = \phi$, and if $q$ intertwines the one-parameter $*$-automorphism groups of $\phi $ and $\psi$ \cite{CAD-ergodic,Hup-factor}. These conditions ensure that any $(\phi , \psi)-$Markov map has a dual $(\psi, \phi )$-Markov map $q^\dag : \B \rightarrow \A$ defined by
$$ \phi (q^\dag (B) A ) := \psi (B q(A) ); \quad \quad A \in \A, B \in \B. $$ A quantum channel is the special case where both $\A ,\B$ are finite unital matrix $*$-algebras and $\phi, \psi$ are unital faithful, normalized traces. Recall we consider the case of $q \in QC (\A)$ so that $q: \A \rightarrow \A$ is a unital quantum channel on a unital matrix $*$-algebra $\A$.

The original definition of factorizability of a $(\phi ,\psi)$-Markov map $q$ is: $q : (\A , \phi ) \rightarrow (\B ,\psi )$ is \emph{factorizable} if
there exists a pair, $(\mc{C} , \ga)$, consisting of a finite von Neumann algebra $\mc{C}$ equipped with faithful normal state $\ga$, $*$-monomorphisms $\alpha : \A \rightarrow \mc{C}$ and $\beta : \B \rightarrow \mc{C}$ which are $(\phi, \ga)$, and $(\psi, \ga)$-Markov, respectively, so that
$$ q = \beta ^\dag \circ \alpha, $$ \cite{CAD-ergodic,Hup-factor}. For the case of interest to us, unital quantum channels $q : \A \rightarrow \B$, between finite unital matrix $*$-algebras, $\A, \B$, $q$ has a factorization in this sense with $\mc{C}$ another finite unital matrix $*$-algebra if and only if $q$ has a unitary matrix factorization (or equivalently, $q$ has a matrix $1$-dilation) as in Lemma \ref{onedilation} above, and we will not have need for this fully general definition of factorization.

The concept of dilation of a $(\phi , \phi)$-Markov map $q: (\A , \phi ) \rightarrow (\A , \phi)$ was originally introduced by K\"{u}mmerer in \cite[Definition 2.1.1]{Kum-Markov}:
A dilation of $q$ is a quadruple $(\mc{M} , \ga , \alpha , J )$ where $\mc{M}$ is a finite von Neumann algebra with faithful normal state $\ga$, $\alpha$ is a $\ga$-preserving $*$-automorphism of $\mc{M}$, and $J : \A \rightarrow \mc{M}$ is a $(\phi, \ga)$-Markov $*$-monomorphism so that
$$ q ^{(n)} = J ^* \alpha ^{(n)} J, $$ and as before, $q^{(n)}$, denotes $n$-fold self-composition. As proven in \cite[Theorem 4.4]{Hup-factor}, a $\phi$-Markov map $q: (\A , \phi ) \rightarrow (\A , \phi)$ has a dilation if and only if it is factorizable.

We focus on the special case of unital quantum channels $q \in QC (\A)$ on finite unital matrix algebras $\A$, the set of all $\tr _\A$-Markov maps. In this setting, if $q$ is factorizable, we refer to a dilation $(\mc{M}, \ga ,\alpha , J )$, of $q$, as defined above as a \emph{power dilation}. Corollary \ref{nopower} shows that if $\ga$ is a faithful trace, $\alpha$ a $*$-automorphism, and $J$ is an embedding of $\A$ into $\mc{M}$, then $\mc{M}$ cannot be a (finite) type-I von Neumann algebra. This motivates the consideration of matrix $N$-dilations of quantum channels as defined in Definition \ref{Ndilation}. Matrix $N$-dilations may also be of more interest in quantum information and quantum computing since they act on finite quantum systems, and hence are, in principle, easier to implement in an experimental setting, \emph{e.g} in a quantum computer \cite{NC}.  We will show that a unital quantum channel $q \in QC (\A )$ has a matrix $N$-dilation, $\alpha _N$, for any $N \in \N$, if and only if $q$ has a unitary matrix factorization, and our explicit and simple construction provides upper bounds on the dimension of the unital matrix algebra on which $\alpha _N$ acts:

\begin{thm*}{ (Theorem \ref{main})}
Let $\A$ be a unital matrix $*$-algebra. A unital quantum channel $q \in QC (\A )$, is factorizable with matrix $1$-dilation $\alpha _1$ acting on $\A \otimes \B$, if and only if
$q$ has a matrix $N$-dilation $\alpha _N$ acting on $\A \otimes \B ^{\otimes N}$, for any $N \in \N$.
\end{thm*}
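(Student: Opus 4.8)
The reverse implication is immediate, and uses only a single value of $N$: if $\alpha_N$ is a matrix $N$-dilation of $q$ acting on $\A\otimes\B^{\otimes N}$, then reading the identity of Definition~\ref{Ndilation} at $M=1$ exhibits $\alpha_N$ as a matrix $1$-dilation of $q$, so by Lemma~\ref{onedilation} $q$ admits a unitary matrix factorization and is factorizable. For the forward implication the plan is to give the explicit ``repeated-interactions'' (collision-model) construction, made $*$-automorphic by a cyclic shift of tensor legs.

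Fix a matrix $1$-dilation of $q$: by Lemma~\ref{onedilation} we may write $\alpha_1=\mr{Ad}_U$ for a unitary matrix factorization $U\in\A\otimes\B$, so that $q(A)=\tr_\B^\A\big(U(A\otimes I_\B)U^*\big)$ for every $A\in\A$. Fix $N\in\N$, write $\A\otimes\B^{\otimes N}=\A\otimes\B_1\otimes\cdots\otimes\B_N$ with each $\B_\ell\cong\B$, let $U^{(\ell)}\in\A\otimes\B_\ell$ denote the copy of $U$ supported on $\A$ and the $\ell$-th tensor leg (identity on the other legs), set $\hat U:=U^{(1)}$, and let $\tau$ be the $*$-automorphism of $\A\otimes\B^{\otimes N}$ which is the identity on $\A$ and cyclically shifts the $\B$-legs, $\tau(A\otimes B_1\otimes\cdots\otimes B_N)=A\otimes B_2\otimes\cdots\otimes B_N\otimes B_1$. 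I would then take
$$ \alpha_N:=\tau\circ\mr{Ad}_{\hat U}, $$
which is a $*$-automorphism of $\A\otimes\B^{\otimes N}$ as a composite of two such (and, incidentally, preserves $\tr_{\A\otimes\B^{\otimes N}}$, though this is not needed).

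The verification then has two steps. \emph{Step 1 (bookkeeping).} Using the conjugation identity $\mr{Ad}_{\hat U}\circ\tau^M=\tau^M\circ\mr{Ad}_{\tau^{-M}(\hat U)}$, together with the fact that $\tau^{-M}(\hat U)=U^{(M+1)}$ whenever $M+1\le N$, one proves by induction on $M$ that
$$ \alpha_N^{(M)}=\tau^M\circ\mr{Ad}_{U^{(M)}}\circ\mr{Ad}_{U^{(M-1)}}\circ\cdots\circ\mr{Ad}_{U^{(1)}},\qquad 1\le M\le N. $$
Since $\tau^M$ is the identity on $\A\otimes I_{\B^{\otimes N}}$ and merely permutes the $\B$-legs, it satisfies $\Phi_{\B^{\otimes N}}^\A\circ\tau^M=\Phi_{\B^{\otimes N}}^\A$, so it suffices to show
$$ f(M):=\tr_{\B^{\otimes N}}^\A\Big(\big(\mr{Ad}_{U^{(M)}}\circ\cdots\circ\mr{Ad}_{U^{(1)}}\big)(A\otimes I_{\B^{\otimes N}})\Big)=q^{(M)}(A),\qquad 1\le M\le N. $$
\emph{Step 2 (the recursion).} I would prove $f(M)=q\big(f(M-1)\big)$, with $f(0)=A$. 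Setting $Z:=\big(\mr{Ad}_{U^{(M-1)}}\circ\cdots\circ\mr{Ad}_{U^{(1)}}\big)(A\otimes I_{\B^{\otimes N}})$, since only $U^{(1)},\dots,U^{(M-1)}$ have acted, $Z$ is the identity on the legs $\B_M,\dots,\B_N$, so $Z=\sum_i A_i\otimes D_i$ with $A_i\in\A$ and $D_i$ supported on $\B_1\otimes\cdots\otimes\B_{M-1}$. Because $U^{(M)}$ touches only $\A$ and the still-undisturbed leg $\B_M$, we get $\mr{Ad}_{U^{(M)}}(Z)=\sum_i\big(U^{(M)}(A_i\otimes I_{\B_M})(U^{(M)})^*\big)\otimes D_i$; tracing out $\B_M$ and invoking the $1$-dilation identity in the single copy $\A\otimes\B_M$ replaces each $A_i$ by $q(A_i)$, and then tracing out $\B_1,\dots,\B_{M-1}$ and using that $q$ (acting on $\A$) commutes with those partial traces gives
$$ f(M)=q\Big(\sum_i A_i\,\tr_{\B^{\otimes(M-1)}}(D_i)\Big)=q\big(f(M-1)\big). $$
Iterating yields $f(M)=q^{(M)}(A)$ for $1\le M\le N$, hence $\Phi_{\B^{\otimes N}}^\A\big(\alpha_N^{(M)}(A\otimes I_{\B^{\otimes N}})\big)=q^{(M)}(A)\otimes I_{\B^{\otimes N}}$ for all $1\le M\le N$, i.e. $\alpha_N$ is a matrix $N$-dilation of $q$.

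The one genuinely delicate point — and the only place the length bound matters — is the restriction $M\le N$ in Step~1: it is exactly what guarantees that at the $M$-th time step the shift $\tau$ has not yet wrapped around, so that $U^{(M)}$ meets a leg still in the identity state and the ``fresh ancilla'' computation of Step~2 applies; for $M>N$ the leg $\B_M$ has been re-used, $Z$ is no longer the identity there, and $\tr_{\B_M}\circ\mr{Ad}_{U^{(M)}}$ ceases to act as $q\otimes\mr{id}$. This is consistent with Corollary~\ref{nopower}, which forbids pushing the construction to $N=\infty$ while keeping $\B^{\otimes N}$ a finite type-I algebra. Apart from that, the whole argument is routine tensor-leg bookkeeping, and keeping the legs straight is the only real care required.
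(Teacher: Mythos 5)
Your proposal is correct and is essentially the paper's own construction: the paper takes $\alpha_N = \mr{Ad}_{U\otimes I_{\B^{\otimes(N-1)}}}\circ\sigma_N$, a conjugation on the first $\B$-leg composed with a cyclic shift (in the opposite composition order and shift direction to yours, which is immaterial), and verifies the $N$-dilation property by the same peeling-off argument, using the $1$-dilation identity together with the commutation relation $\tr_\B^\A\circ(q\otimes\mr{id}_\B)=q\circ\tr_\B^\A$, and it likewise treats the converse as the trivial $M=1$ specialization. The only presentational difference is that the paper expands $\alpha_N^{(M)}(A\otimes I_{\B^{\otimes N}})$ explicitly in Kraus operators and telescopes the iterated conditional expectation, whereas you reorganize the powers via the conjugation identity and run the clean recursion $f(M)=q\bigl(f(M-1)\bigr)$.
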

In the above $\B ^{\otimes N} := \underbrace{\B \otimes \B \otimes \cdots \otimes \B }_{N \ \mbox{times}}.$

\section{$N-$dilations of quantum channels}

Let $q \in QC (\A )$ be a unital quantum channel on a finite unital matrix $*$-algebra, $\A$, with faithful normalized trace $\tra$, and Kraus operators $q \sim \{ q_k \} _{k=1} ^p$.  Assume that $q$ has a unitary matrix factorization:
\be U := \sum _{k=1} ^p q_k \otimes b_k \in \A \otimes \B. \label{unifact} \ee
For simplicity of notation, let
$$ \tr _\B ^\A := \mr{id} _\A \otimes \tr _\B, $$ be the partial trace of $\A \otimes \B$ onto $\A$, and let
$$ \Phi := \mr{id} _\A \otimes I _\B \cdot \tr _\B := (\tr _\B ^\A ) ^\dag \tr _\B ^\A, $$ be the unique, unital, $\tr _\A \otimes \tr _\B$-preserving conditional expectation of $\A \otimes \B$ onto $\A \otimes I_\B$. Recall that since $U$ is a unitary matrix factorization of $q$, $\mr{Ad} _U$ is a matrix $1-$dilation of $q$, by Lemma \ref{onedilation}:
$$ (q \otimes I_\B ) (A \otimes I_\B ) = \left( \mr{id} _\A \otimes \tr _\B \right) \left( U ( A \otimes I_\B ) U^* \right); \quad \quad A \in \A. $$ Equivalently,
$$ q \circ \Phi = \tr _\B ^\A \circ \mr{Ad} _U \circ \Phi, \quad \quad \mbox{or} \quad  \Phi \circ \mr{Ad} _U \circ \Phi  = q \circ \tr _\B ^\A \otimes I_\B. $$

For any $N \in \N$, consider the unital matrix $*$-algebra $$ \A \otimes \underbrace{\B \otimes \B \otimes \cdots \otimes \B}_{N \ \mbox{times}} =: \A \otimes \B ^{\otimes N}. $$ Set,
$$ U_N := U \otimes I_{\B ^{\otimes (N-1)}} \in \A \otimes \B ^{\otimes N}.$$  Observe that, by uniqueness of the $\tr _\A \otimes \tr _{\B ^{\otimes N}}$-preserving conditional expectation, $\Phi _N$, of $\A \otimes \B ^{\otimes N}$ onto $\A \otimes I_{\B ^{\otimes N}} $,
\be \Phi _N ( A \otimes B_1 \otimes B_2 \otimes \cdots \otimes B_N )= \tr _\B ^{\A} \left( \tr _\B ^{\A} \left( \cdots \tr _\B ^{\A} \left( \tr _\B ^{\A} (A \otimes B_1 ) \otimes B_2 \right) \cdots \right) \otimes B_N \right) \otimes I_{\B ^{\otimes N}}  \label{ceform} \ee

Also observe that
\ba \tr _\B ^\A \circ (q \otimes \mr{id} _\B ) (A \otimes B) & = & \tr _\B ^\A ( q(A) \otimes B ) \nn \\
& = & q(A) \cdot \tr_\B (B), \nn \ea and this proves that
\be \tr _\B ^\A \circ (q \otimes \mr{id} _\B ) = q \circ \tr _\B ^\A. \label{commute} \ee

Define the $*-$automorphism, $\sigma _N$, of $\A \otimes \B ^{\otimes N}$ by a cyclic permutation of the $N$ tensor factors of $\B ^{\otimes N}$: For any $A \in \A$, and $B_1, ... ,B_N \in \B$,
$$ \sigma _N (A \otimes B_1 \otimes B_2 \cdots \otimes B_N) := A \otimes B_N \otimes B_1 \otimes B_2 \cdots \otimes B_{N-1}. $$
Finally, define the $*$-automorphism
\be \alpha _N :=  \mr{Ad} _{U_N} \circ \sigma _N, \label{Ndilform} \ee a $*-$automorphism of $\A \otimes \B ^{\otimes N}$.
\begin{thm} \label{main}
    Let $q \in QC (\A)$ be a matrix factorizable quantum channel, $q(A) = (\mr{id} _\A \otimes \tr _\B ) \circ \mr{Ad} _U (A \otimes I_\B)$ where $q \sim \{ q_k \} _{k=1} ^p$, and $U := \sum _{k=1} ^p q_k \otimes b_k \in \A \otimes \B$ is unitary. Then the   $*$-automorphism, $\alpha _N$, defined above, is a finite matrix $N$-dilation of $q$ acting on the algebra $\A \otimes \B ^{\otimes N}$.
\end{thm}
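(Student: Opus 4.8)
The plan is to prove the key identity
\[
\Phi_N \circ \alpha_N^{(M)} \circ \Phi_N = \left( q^{(M)} \circ \tr_{\B}^{\A,(M)} \right) \otimes I_{\B^{\otimes N}}
\]
for every $1 \le M \le N$, which is exactly the statement that $\alpha_N$ is a matrix $N$-dilation once one unwinds Definition \ref{Ndilation} (restricting to inputs of the form $A \otimes I_{\B^{\otimes N}}$, where $\Phi_N$ acts as the identity, and then applying $\tr_{\B^{\otimes N}}^{\A}$). The conceptual heart of the argument is that $\sigma_N$ is a cyclic shift of the $N$ copies of $\B$, so that conjugating by $U_N = U \otimes I_{\B^{\otimes(N-1)}}$ and then shifting moves a ``fresh'' copy of $\B$ into the active slot at each step; after $M \le N$ steps we have only ever touched $M$ distinct tensor factors, none of them twice, so no ``recycled randomness'' corrupts the computation and each step contributes exactly one application of the matrix $1$-dilation identity $\Phi \circ \mr{Ad}_U \circ \Phi = q \circ \tr_\B^\A \otimes I_\B$ proved above via Lemma \ref{onedilation}.

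First I would set up notation carefully: index the $N$ copies of $\B$ by $\Z/N\Z$, write $\mr{Ad}_U^{(j)}$ for adjunction by $U$ acting on the $\A$ factor and the $j$-th copy of $\B$ (with identity elsewhere), and note that $\sigma_N \circ \mr{Ad}_U^{(j)} \circ \sigma_N^{-1} = \mr{Ad}_U^{(j+1)}$. Since $U_N = $ adjunction in the first copy, $\alpha_N = \mr{Ad}_U^{(1)} \circ \sigma_N$, and an easy induction gives the normal-ordered form
\[
\alpha_N^{(M)} = \mr{Ad}_U^{(M)} \circ \mr{Ad}_U^{(M-1)} \circ \cdots \circ \mr{Ad}_U^{(1)} \circ \sigma_N^{M},
\]
where the adjunctions act on pairwise distinct copies of $\B$ (this is where $M \le N$ is used). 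Next I would compute on a simple tensor $A \otimes I_{\B^{\otimes N}}$: the shift $\sigma_N^M$ does nothing to this element; then apply, in turn, $\mr{Ad}_U^{(1)}$, then the conditional expectation onto everything except copy $1$ — but rather than expectation I will push through the partial trace structure, using the commutation identity \eqref{commute}, $\tr_\B^\A \circ (q \otimes \mr{id}_\B) = q \circ \tr_\B^\A$, to peel off one copy at a time. Concretely, after acting by $\mr{Ad}_U^{(1)}$ the state lives in $\A \otimes \B \otimes I_{\B^{\otimes(N-1)}}$; taking the partial trace over copy $1$ reproduces $q(A)$ by the matrix $1$-dilation property; then $\mr{Ad}_U^{(2)}$ acts on $\A$ and copy $2$, whose state is still $I_\B$, so the same identity applies again, yielding $q^{(2)}(A)$ after tracing copy $2$; iterating $M$ times gives $q^{(M)}(A)$.

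The step I expect to be the main obstacle is bookkeeping the order of operations precisely enough to justify interleaving the adjunctions with the partial traces: a priori $\Phi_N \circ \alpha_N^{(M)}$ applies all $M$ adjunctions \emph{before} any trace, whereas the clean inductive computation wants to trace out copy $j$ right after $\mr{Ad}_U^{(j)}$. This is legitimate because $\mr{Ad}_U^{(j+1)}, \dots, \mr{Ad}_U^{(M)}$ all act trivially on copy $j$, so $\mr{id} \otimes \cdots \otimes \tr_\B^{(j)} \otimes \cdots \otimes \mr{id}$ commutes past them — this is precisely the content of \eqref{commute}, applied to the appropriate tensor leg. I would state this commutation as a small lemma (partial trace over a leg untouched by a $*$-automorphism supported on other legs commutes with that automorphism), then the induction goes through: the base case $M=1$ is Lemma \ref{onedilation}, and the inductive step applies the $M=1$ identity in the ``last unused'' copy of $\B$ while the commutation lemma carries the already-computed $q^{(M-1)}(A)$ through the final adjunction untouched. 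Finally I would remark that $\B^{\otimes N}$ has dimension $(\dim \B)^N$ and note this gives the claimed bound, and that the converse direction (a matrix $N$-dilation for some $N \ge 1$ gives a matrix $1$-dilation hence a unitary factorization) is immediate from Lemma \ref{onedilation} by restricting attention to $M = 1$.
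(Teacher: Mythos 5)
Your strategy is in substance the same as the paper's proof: both arguments peel off one tensor factor of $\B$ per time step, invoking the matrix $1$-dilation identity of Lemma \ref{onedilation} once per step and the commutation relation (\ref{commute}) to carry the already-computed powers of $q$ past the remaining operations; the paper does the bookkeeping by expanding $\alpha_N^{(M)}(A\otimes I_{\B^{\otimes N}})$ explicitly in the Kraus operators and then collapsing the nested conditional expectation (\ref{ceform}) from the inside out, while you organize the same computation through the conjugation relation $\sigma_N\circ\mr{Ad}_U^{(j)}\circ\sigma_N^{-1}=\mr{Ad}_U^{(j+1)}$ and an induction with interleaved partial traces. One correction is needed: your normal-ordered formula lists the adjunctions in the wrong order. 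The maps $\mr{Ad}_U^{(j)}$ all act nontrivially on the common $\A$ leg, so they do not commute with one another, and pushing the shifts in $\left(\mr{Ad}_U^{(1)}\circ\sigma_N\right)^{(M)}$ to the right gives
\begin{equation*}
\alpha_N^{(M)}=\mr{Ad}_U^{(1)}\circ\mr{Ad}_U^{(2)}\circ\cdots\circ\mr{Ad}_U^{(M)}\circ\sigma_N^{M},
\end{equation*}
so that the adjunction on copy $M$ is applied first and the one on copy $1$ last, the reverse of what you wrote. The slip is harmless for the theorem: on inputs $A\otimes I_{\B^{\otimes N}}$ both orderings produce the same $\A$-component $q_{j_M}\cdots q_{j_1}Aq_{k_1}^*\cdots q_{k_M}^*$ and merely distribute the factors $b_{j_m}b_{k_m}^*$ over the $\B$-legs in opposite orders, which $\Phi_N$ does not distinguish, and your interleaving induction runs verbatim with the corrected order (trace out copy $M$ immediately after $\mr{Ad}_U^{(M)}$, then copy $M-1$, and so on, each adjunction meeting a fresh identity factor precisely because $M\leq N$). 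But the identity as you stated it is false as an equality of $*$-automorphisms, so it should be corrected before the induction is set up; with that fix, your argument is complete and matches the paper's.
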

\begin{proof}
Suppose that $1 \leq M \leq N$. The action of $\alpha ^{(M)}$ on $A \otimes I_{\B ^{\otimes N}}$ is:
$$ \alpha ^{(M)} (A \otimes I_{\B ^{\otimes N}} ) = \sum _{\substack{j_1, ... , j_M =1 \\ k_1, ... , k_M =1} } ^p q_{j_M} \cdots q_{j_1} A q_{k_1} ^* \cdots q_{k_M} ^* \otimes b_{j_M} b_{k_M} ^* \otimes b_{j_{M-1}} b_{k_{M-1}} ^* \otimes \cdots \otimes b_{j_1}b_{k_1} ^* \otimes I_{\B ^{\otimes (N-M)}}. $$

Then, by equation (\ref{ceform}), \scriptsize
\ba & &  \Phi _N \circ \alpha _N ^{(M)}  \left( A \otimes I_{\B ^{\otimes N}} \right) \label{orgexp} \\
& = & \ptr \left( \ptr \left( \cdots \left( \underbrace{\ptr \left( \sum _{\substack{j_1, ..., j_M \\ k_1 , ..., k_M}  =1} ^p q_{j_M} \cdots q_{j_1} A q_{k_1} ^* \cdots q_{k_M} ^* \otimes b_{j_M} b_{k_M} ^* \right)}_{(A)}
\otimes b _{j_{M-1}} b _{k_{M-1} } ^* \right) \cdots \right) \otimes b_{j_1} b_{k_1} ^* \right) \otimes I_{\B ^{\otimes N}} \nn  \ea \normalsize
Since $\mr{Ad} _U$ is a $1$-dilation of $q$, the expression (A) evaluates to:
\ba
\mathrm{(A)} & = & \ptr \circ \mr{Ad} _U \left(   \sum _{\substack{j_1, ..., j_{M-1}  \\  k_1 , ..., k_{M-1}}  =1} ^p q_{j_{M-1}} \cdots q_{j_1} A q_{k_1} ^* \cdots q_{k_{M-1}} ^* \otimes I_\B \right) \nn \\ & = & q \left(   \sum _{\substack{j_1, ..., j_{M-1} \\ k_1 , ..., k_{M-1}}  =1}  q_{j_{M-1}} \cdots q_{j_1} A q_{k_1} ^* \cdots q_{k_{M-1}} ^*  \right). \nn \ea
Applying the conditional expectation formula (\ref{ceform}), and the commutation formula (\ref{commute}), the original expression (\ref{orgexp}) becomes \tiny
\ba & & \ptr \left( \ptr \left( \cdots \left( \tr _\B ^\A \circ (q \otimes \mr{id} _\B )    \left( \sum _{\substack{j_1, ..., j_{M-1} \\ k_1 , ..., k_{M-1}}  =1} ^p q_{j_{M-1}} \cdots q_{j_1} A q_{k_1} ^* \cdots q_{k_{M-1}} ^* \otimes b_{j_{M-1}} b_{k_{M-1}} ^*  \right) \otimes b _{j_{M-2}} b _{k_{M-2} } ^* \right) \cdots \right) \otimes b_{j_1} b_{k_1} ^* \right) \otimes I_{\B ^{\otimes N}} \nn \\
& & =  \ptr  \left( \cdots \left( \tr _\B ^\A \circ (q \otimes \mr{id} _\B ) \circ \mr{Ad _U}     \left( \sum _{\substack{j_1, ..., j_{M-2} \\ k_1 , ..., k_{M-2}}  =1} ^p q_{j_{M-2}} \cdots q_{j_1} A q_{k_1} ^* \cdots q_{k_{M-2}} ^* \otimes I _\B  \right)
\otimes b _{j_{M-2}} b _{k_{M-2} } ^* \right) \cdots \otimes b_{j_1} b_{k_1} ^* \right) \otimes I_{\B ^{\otimes N}} \nn \\
&=& \ptr  \left( \cdots \left( q\circ \tr _\B ^\A   \circ \mr{Ad _U} \circ \Phi _1    \left( \sum _{\substack{j_1, ..., j_{M-2} \\ k_1 , ..., k_{M-2}}  =1} ^p q_{j_{M-2}} \cdots q_{j_1} A q_{k_1} ^* \cdots q_{k_{M-2}} ^* \otimes I _\B  \right)
\otimes b _{j_{M-2}} b _{k_{M-2} } ^* \right) \cdots  \otimes b_{j_1} b_{k_1} ^* \right) \otimes I_{\B ^{\otimes N}} \nn \\
& &\ptr  \left( \cdots \left( q ^{(2)}   \left( \sum _{\substack{j_1, ..., j_{M-2} \\ k_1 , ..., k_{M-2}}  =1} ^p q_{j_{M-2}} \cdots q_{j_1} A q_{k_1} ^* \cdots q_{k_{M-2}} ^* \otimes b _{j_{M-2}} b _{k_{M-2} } ^* \right) \cdots  \otimes b_{j_1} b_{k_1} ^* \right) \right) \otimes I_{\B ^{\otimes N}} \nn \\ &  & \cdots = q ^{(M)} (A) \otimes I_{\B ^{\otimes N}}. \nn \ea
\normalsize
\end{proof}

\begin{remark}
It seems reasonable to expect that one can take the limit of the above construction of the $N-$dilation, $\alpha _N$, in a suitable way to obtain a power dilation, $\alpha$, of the original quantum channel $q \in QC (\A )$, which acts on a finite von Neumann algebra realized as an infinite tensor product of unital matrix $*$-algebras \cite{vN-infinite,AW-factors}.
\end{remark}

\section{Representing contractions and dilations}

This section develops some general characterizations of unital quantum channels which may be of independent interest. Namely, we will define a natural map from unital quantum channels, $q \in QC (\A )$ acting on finite unital matrix $*$-algebras, $\A$, to contractions $T_q$ acting on the Gelfand-Naimark-Segal (GNS) Hilbert space associated to $\A$ and its faithful normalized trace $\tr _\A$.

We will show that many nice properties of quantum channels (including the relationship between a factorizable unital quantum channel and its $N$-dilations) correspond to similarly nice properties under the map $q \mapsto T_q$. We will apply this construction to prove, in particular, that if $q \in QC (\A)$ is factorizable, then $q$ is either already a unitary $*$-automorphism, or any power dilation, $\alpha$ of $q$ cannot act on a finite unital matrix $*$-algebra (Corollary \ref{nopower}).

Let $\A$ be a finite unital matrix $*$-algebra, with faithful normalized trace, $\tra$. Let $L^2 (\A) = L^2 (\A , \tra )$ denote the tracial GNS space with inner product:
$$ \ip{A_1}{A_2} _{\tra} := \tra \left( A_1^*A_2 \right); \quad \quad A_1, A_2 \in \A, $$ (conjugate linear in the first argument). Also let $L : \A \rightarrow \L (L^2 (\A))$ be the left regular representation of $\A$ on $L^2 (\A)$:
$$ L_A A_1 := A A_1; \quad A, A_1 \in \A. $$

\begin{defn}
If $\La : \A \rightarrow \A$ is any linear (or anti-linear) map, we can define a corresponding linear map $T _\La \in \L (L^2 (\A ))$ by the formula:
$$ T_\La A := \La (A) \in L^2 (\A ). $$ We will call the map $T _\La$, the \emph{representing map} of $\La$.
\end{defn}

\begin{lemma} \label{repcontract}
If $q \in QC (\A)$ then the representing map $T _q : L^2 (\A) \rightarrow L^2 (\A )$ defined by $$ T_q ( A) := q(A) ; \quad \quad A \in \A,$$
is a contraction.
\end{lemma}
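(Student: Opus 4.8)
The plan is to prove the norm inequality $\|T_q A\|_{\tra} \le \|A\|_{\tra}$ for every $A \in \A$; since $L^2(\A)$ is just $\A$ equipped with the inner product $\ip{A_1}{A_2}_{\tra} = \tra(A_1^* A_2)$, this is precisely the statement that
$$ \tra\bigl( q(A)^* q(A) \bigr) \le \tra(A^* A), \qquad A \in \A. $$
The key input is the Kadison--Schwarz inequality for the unital completely positive map $q$: one has the operator inequality $q(A)^* q(A) \le q(A^* A)$ in $\A$ for all $A$. I would include the short standard argument for this: $2$-positivity of $q$ applied to the positive block matrix $\bbm A^* A & A^* \\ A & I \ebm = \bbm A^* \\ I \ebm \bbm A & I \ebm \ge 0$ yields $\bbm q(A^* A) & q(A)^* \\ q(A) & I \ebm \ge 0$ (using unitality, $q(I) = I$), and positivity of a $2 \times 2$ operator block matrix with invertible lower-right corner is equivalent to the Schur-complement condition $q(A^* A) - q(A)^* q(A) \ge 0$.

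Granting Kadison--Schwarz, the rest is immediate. Since $\tra$ is a faithful (in particular positive) trace, applying it to $q(A^* A) - q(A)^* q(A) \ge 0$ gives $\tra\bigl(q(A)^* q(A)\bigr) \le \tra\bigl( q(A^* A) \bigr)$. Finally, $q$ is trace-preserving, $\tra \circ q = \tra$, so the right-hand side equals $\tra(A^* A)$, and we conclude $\|T_q A\|_{\tra}^2 \le \|A\|_{\tra}^2$, i.e.\ $T_q$ is a contraction.

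I do not anticipate any genuine obstacle: all the ingredients---Kadison--Schwarz, positivity and faithfulness of $\tra$, and trace-preservation of $q$---are either standard or explicit hypotheses, and no completeness or domain issues arise because $\A$ is finite-dimensional, so $T_q$ is just $q$ regarded as a linear endomorphism of the finite-dimensional inner product space $L^2(\A)$. As an alternative that avoids invoking Kadison--Schwarz, one could argue directly from a minimal Kraus decomposition $q \sim \{q_k\}_{k=1}^p$, using $\sum_k q_k q_k^* = I$ (unitality) and $\sum_k q_k^* q_k = I$ (trace-preservation): expanding $\tra\bigl(q(A)^* q(A)\bigr) = \sum_{k,l} \tra\bigl(q_l^* q_k A^* q_k^* q_l A\bigr)$ and applying a Cauchy--Schwarz/Gram-matrix positivity estimate in the $q_k^* q_l$, together with $\sum_k q_k^* q_k = I$, reproduces the same bound; but the Kadison--Schwarz route is shorter and is the one I would present.
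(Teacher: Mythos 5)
Your argument is correct and is essentially the paper's own proof: the paper also derives $\tra(q(A)^*q(A)) \le \tra(q(A^*A)) = \tra(A^*A)$ from the Schwarz inequality for $2$-positive maps (citing Paulsen) together with trace preservation. The extra material you include (the derivation of Kadison--Schwarz and the alternative Kraus-operator route) is fine but not needed.
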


\begin{proof}
    This is an easy application of the Schwarz inequality for $2$-positive maps (\cite[Proposition 3.3.]{Paulsen}) and the fact that $q$ preserves the trace:
\ba \| T_q A \|  ^2 & = & \ip{T_q A}{T_q A} _{\tra} \nn \\
& = & \tra \left( q(A) ^* q(A) \right) \nn \\
& \leq & \tra \left( q (A ^* A ) \right) \quad \quad \mbox{(Schwarz inequality)} \nn \\
& = & \tra (A^* A) \quad \quad \mbox{(trace preservation of $q$)} \nn \\
& = & \| A \| _{\tra} ^2. \nn \ea
\end{proof}

\begin{remark} \label{effects}
We call $T_q$ the \emph{representing contraction} of the quantum channel $q$.  Observe that
$$ T_{q^\dag} = T_q ^*, $$ and that if $q_1, q_2$ are quantum channels then
$$ T_{q_2 \circ q_1} = T_{q_2} T_{q_1}. $$ The map $q \mapsto T_q$ takes $QC (\A )$ onto a unital convex subset of the closed unit ball of
$\L (L^2 (\A ))$ which is closed under products (\emph{i.e.} a closed convex monoid).
\end{remark}

\begin{thm} \label{represent}
    Let $\A$ be a finite unital matrix $*$-algebra. Given $q \in QC (\A)$, $q$ is:
\bn
    \item a $*-$automorphism if and only if $T_q$ is unitary.
    \item a $\tra$-preserving conditional expectation onto a unital $*-$subalgebra if and only if $T_q$ is a projection.
     \item a $*-$monomorphism of a unital $*$-subalgebra into $\A$ composed with the conditional expectation onto that unital $*$-subalgebra if and only if $T_q$ is a partial isometry.
\en
\end{thm}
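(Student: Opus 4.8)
The plan is to work through the functorial dictionary recorded in Remark~\ref{effects}: the assignment $q \mapsto T_q$ is injective, $T_{q^\dag} = T_q^*$, and $T_{q_2 \circ q_1} = T_{q_2}T_{q_1}$. Combined with the Schwarz inequality for $2$-positive maps, faithfulness of $\tra$, and the multiplicative domain theorem (see \cite{Paulsen}), these let me translate the operator facts that $V$ is unitary iff $V^*V = I = VV^*$, that $P$ is an orthogonal projection iff $P = P^* = P^2$, and that $V$ is a partial isometry iff $V^*V$ is a projection iff $VV^*V = V$, into identities among $q$, $q^\dag$, and their compositions, which I then upgrade to structural statements; I would prove (1), then (2), and deduce (3) from (2). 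For (1), in the forward direction a $*$-automorphism satisfies $q(A)^*q(A) = q(A^*A)$, so $\| T_q A \|_{\tra}^2 = \tra(q(A^*A)) = \tra(A^*A)$ by trace preservation and $T_q$ is an isometry, which is onto because $q$ is, hence unitary. Conversely, $T_q^*T_q = I = T_qT_q^*$ gives $q^\dag \circ q = \mr{id}_\A = q \circ q^\dag$ by injectivity, so $q$ is invertible with inverse $q^\dag$; and $\tra(q(A)^*q(A)) = \| T_q A \|_{\tra}^2 = \| A \|_{\tra}^2 = \tra(q(A^*A))$ together with the Schwarz inequality $q(A^*A) \geq q(A)^*q(A)$ and faithfulness forces $q(A^*A) = q(A)^*q(A)$ for every $A$; applying this also to $A^*$ shows the multiplicative domain of $q$ is all of $\A$, so $q$ is a bijective unital $*$-homomorphism, that is, a $*$-automorphism.

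For (2), the forward direction is quick: $q^2 = q$ gives $T_q^2 = T_q$, while the $\B$-bimodule property together with $\tra \circ q = \tra$ yields $\tra(A_1^* q(A_2)) = \tra(q(A_1)^* q(A_2)) = \tra(q(A_1)^* A_2)$, so $T_q = T_q^*$ and $T_q$ is the orthogonal projection onto $L^2(\B)$. For the converse I would start from $q \circ q = q$, set $\B := q(\A) = \{A \in \A : q(A) = A\}$, and first show $\B$ is a unital $*$-subalgebra: for $A \in \B$, the Schwarz inequality gives $q(A^*A) \geq q(A)^*q(A) = A^*A$ and trace preservation makes the traces equal, so faithfulness forces $q(A^*A) = A^*A \in \B$; since $\B$ is a self-adjoint linear subspace closed under $A \mapsto A^*A$, polarization makes it a subalgebra, and $q(I)=I$ makes it unital. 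The same computation (applied also to $A^*$) places $\B$ in the multiplicative domain of $q$, so $q(bAb') = b\,q(A)\,b'$ for all $b,b' \in \B$ and $A \in \A$; together with $q|_\B = \mr{id}_\B$ and $\tra \circ q = \tra$, this identifies $q$ as the unique $\tra$-preserving conditional expectation onto $\B$.

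For (3), in the forward direction I would write $q = \pi \circ E$ with $E$ the $\tra$-preserving conditional expectation onto a unital $*$-subalgebra $\B$ and $\pi : \B \to \A$ a $*$-monomorphism (necessarily unital and trace preserving, since $q \in QC(\A)$), so that $T_q = T_\pi T_E$; by (2), $T_E$ is the orthogonal projection onto $L^2(\B)$, and $T_\pi$ restricted to $L^2(\B)$ is isometric because $\| T_\pi B \|_{\tra}^2 = \tra(\pi(B^*B)) = \tra(B^*B)$, so $T_q$ is a partial isometry with initial space $L^2(\B)$. Conversely, if $T_q$ is a partial isometry then $T_q^*T_q = T_{q^\dag \circ q}$ is a projection, so by (2) the map $E := q^\dag \circ q$ is the $\tra$-preserving conditional expectation onto a unital $*$-subalgebra $\B = \{A : E(A) = A\}$; the partial-isometry identity $VV^*V = V$ becomes $q \circ q^\dag \circ q = q$, that is, $q \circ E = q$, so $q = (q|_\B) \circ E$, and it remains to verify that $\pi := q|_\B$ is a $*$-monomorphism: unitality, trace preservation and $*$-preservation descend from $q$; injectivity holds because $q(B) = 0$ with $B \in \B$ gives $B = E(B) = q^\dag(q(B)) = 0$; and multiplicativity on $\B$ holds because $B \in \B = \ran{T_q^*T_q}$ lies in the initial space of $T_q$, whence $\tra(q(B)^*q(B)) = \| B \|_{\tra}^2 = \tra(q(B^*B))$, which with Schwarz and faithfulness puts $\B$ into the multiplicative domain of $q$. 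Thus $q$ has the asserted form.

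The recurring technical heart, and the step I expect to be the main obstacle, is the passage from the numerical equality $\tra(q(A)^*q(A)) = \tra(q(A^*A))$ to the operator equality $q(A^*A) = q(A)^*q(A)$ by way of the Schwarz inequality and faithfulness of $\tra$, followed by the appeal to the multiplicative domain theorem to turn this into genuine multiplicativity; in part (2) there is the additional polarization step showing that the fixed-point space $\{A : q(A) = A\}$ is closed under multiplication. Everything else is routine bookkeeping with the functor $q \mapsto T_q$.
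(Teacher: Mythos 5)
Your proof is correct, and on part (1) and the forward directions of (2) and (3) it coincides with the paper's argument: isometry via the Schwarz inequality plus trace preservation, and Choi's multiplicative domain theorem together with faithfulness of $\tra$ for the converses. The differences lie in the converses of (2) and (3). For (2), the paper simply asserts that $q = q^\dag = q^2$ forces $q$ to be a conditional expectation; you prove this in full, showing the fixed-point set $\B = \{ A : q(A) = A\}$ is a unital $*$-subalgebra via Schwarz, faithfulness and polarization, and then extracting the bimodule property from the multiplicative domain theorem --- a useful filling-in of a step the paper leaves implicit. For (3), the paper argues directly on $\ker{q}^\perp$: it invokes Lemma \ref{sakern} for self-adjointness of $\ker{q}^\perp$, applies Choi's theorem to get a multiplicative-domain subalgebra containing $\ker{q}^\perp$, and uses faithfulness of $q$ to show that subalgebra equals $\ker{q}^\perp$, so that $q$ restricted there is the desired $*$-monomorphism. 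You instead route the converse through part (2): since $T_q^* T_q = T_{q^\dag \circ q}$ is a projection, $E := q^\dag \circ q$ is the $\tra$-preserving conditional expectation onto a unital $*$-subalgebra $\B$, the partial-isometry identity $T_q T_q^* T_q = T_q$ gives $q \circ E = q$, hence $q = (q|_\B) \circ E$, and multiplicativity and injectivity of $q|_\B$ follow as in (1) because $\B$ is the initial space of $T_q$. This is a genuinely different organization: it makes (3) a formal consequence of (2), it obtains self-adjointness of the initial space for free (it is the fixed-point algebra of the channel $q^\dag \circ q$), so Lemma \ref{sakern} is not needed, and it identifies the subalgebra intrinsically as $\ran{E}$; the paper's route is shorter given its auxiliary lemma and exhibits the subalgebra concretely as $\ker{q}^\perp$, which is what its later corollary on multiplicative domains exploits. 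Both arguments are valid.
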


A $*$-monomorphism is an injective $*$-homomorphism. Any quantum channel $q \in QC (\A)$ is necessarily \emph{faithful} since it preserves the faithful normalized trace $\tra$:
$$ q(A^*A) = 0 \quad \Leftrightarrow A =0. $$ It follows that if $q$ is a $*$-homomorphism on all of $\A$, it is necessarily injective. Since everything is finite
dimensional this implies that $T_q$ and hence $q$ is also surjective so that $q$ is a $*$-automorphism.

\begin{lemma} \label{sakern}
    The kernel and co-kernel of any representing contraction, $T_q$, of a unital quantum channel $q \in QC (\A)$ are self-adjoint:
If $A \in \ker{T_q}$ or $\ker{T_q } ^\perp$ respectively, then so is $A^*$.
\end{lemma}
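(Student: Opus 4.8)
The plan is to work directly with the effects and the defining bilinear relation of the tracial dual, exploiting that a quantum channel has a linearly independent set of Kraus operators $q \sim \{q_k\}_{k=1}^p$, and that $q^\dag \sim \{q_k^*\}$. First I would compute $T_q^* T_q$ and $T_q T_q^*$ explicitly as operators on $L^2(\A)$. Since $T_q A = q(A) = \sum_k q_k A q_k^*$ and (by Remark \ref{effects}) $T_q^* = T_{q^\dag}$ with $q^\dag(A) = \sum_k q_k^* A q_k$, one gets $T_q^* T_q (A) = \sum_{k,l} q_k^* (q_l A q_l^*) q_k = \sum_{k,l} (q_k^* q_l) A (q_k^* q_l)^*$, which is itself a completely positive map on $\A$; call it $r := q^\dag \circ q \in QC(\A)$ (it is unital and trace-preserving, being a composition of such). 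Then $\ker{T_q} = \ker{T_r}$ and $\ker{T_q}^\perp = \ran{T_r}$ (closure is automatic in finite dimensions), so it suffices to show that the kernel and range of the representing contraction of the self-dual channel $r = r^\dag$ are self-adjoint.

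The key point is self-adjointness of $r$ in the sense $r = r^\dag$, equivalently $T_r = T_r^*$, i.e. $T_r$ is a positive self-adjoint contraction on $L^2(\A)$. For such an operator, $\ker{T_r}^\perp = \ran{T_r} = \overline{\ran{T_r}}$. Now I would use that the involution $* : L^2(\A) \to L^2(\A)$, $A \mapsto A^*$, is a conjugate-linear isometry (since $\tra(A^* A) = \tra(A A^*) = \tra((A^*)^*(A^*))$ by traciality), and that it \emph{commutes} with $T_r$ in the appropriate conjugate-linear sense: $T_r(A^*) = r(A^*) = \sum_{k,l}(q_k^* q_l) A^* (q_k^* q_l)^* = \big(\sum_{k,l}(q_k^* q_l) A (q_k^* q_l)^*\big)^* = (T_r A)^*$, because $r$ is a $*$-preserving map (it is positive, or simply note it sends self-adjoint to self-adjoint and is linear). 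Hence if $T_r A = 0$ then $T_r(A^*) = (T_r A)^* = 0$, so $\ker{T_r}$ is $*$-invariant; and since $\ker{T_r}^\perp = \ran{T_r}$, if $A = T_r B$ then $A^* = (T_r B)^* = T_r(B^*) \in \ran{T_r} = \ker{T_r}^\perp$, so the co-kernel is $*$-invariant as well.

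A cleaner route that avoids introducing $r$ at all: the involution $J : A \mapsto A^*$ is a conjugate-linear isometry on $L^2(\A)$, and the computation above (applied to $q$ rather than $r$) shows $J T_q J = T_{q^\dag} = T_q^*$, i.e. $T_q^* = J T_q J$ with $J^2 = \mathrm{id}$. Then $A \in \ker{T_q} \iff T_q A = 0 \iff J T_q^* J A = 0 \iff T_q^*(JA) = 0$, so $J$ maps $\ker{T_q}$ onto $\ker{T_q^*}$; but for a contraction on finite-dimensional space $\ker{T_q^*} = \ran{T_q}^\perp$, which is not immediately $\ker{T_q}$. So this shortcut needs the extra input that $\ker{T_q} = \ker{T_q^* T_q}$ and $\ker{T_q^* T_q} = \ker{(JT_qJ)(T_q)} = \ker{J(T_q J T_q)}= \ker{T_q J T_q}$; I would instead just run the $r = q^\dag\circ q$ argument, which is transparent. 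The main obstacle is purely bookkeeping: making sure that passing from $q$ to $r = q^\dag \circ q$ genuinely preserves both the kernel (clear: $\ker{T_r} = \ker{T_q^*T_q} = \ker{T_q}$) and the orthogonal complement of the kernel (clear: $\ker{T_q}^\perp = \overline{\ran{T_q^*}} = \overline{\ran{T_q^* T_q}} = \ran{T_r}$), and that $r$ is genuinely $*$-preserving and self-dual. None of these is deep; the substantive content is the one-line verification that $J T_q J = T_{q^\dag}$, i.e. conjugation by the modular-type involution interchanges a channel and its tracial dual.
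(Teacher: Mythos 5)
Your operative argument (the one through $r := q^\dag \circ q$) is correct, and it is essentially the paper's proof in slightly more roundabout form: the paper observes directly that $\ker{T_q}$ is $*$-invariant because $q(A^*) = q(A)^*$ (any positive map preserves adjoints), and that $\ker{T_q}^\perp = \ran{T_{q^\dag}}$ is $*$-invariant because $q^\dag$ is again a quantum channel, hence also $*$-preserving. Your detour through $r$ uses exactly these two ingredients ($*$-preservation plus $\ker{T_q}^\perp = \ran{T_q ^*}$ in finite dimensions), so it buys nothing extra, but it is sound.

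One side assertion, however, is false and should be deleted, especially since you end by calling it ``the substantive content'': it is not true that $J T_q J = T_{q^\dag}$ for the conjugation $J : A \mapsto A^*$. The computation you invoke shows $T_q(A^*) = (T_q A)^*$, i.e. $J T_q = T_q J$, hence $J T_q J = T_q$, not $T_q^*$; this is precisely the later corollary in the paper that $T_q$ commutes with the conjugation $C$. A quick counterexample: for $q = \mathrm{Ad}_U$ with $U$ unitary, $J T_q J = T_q$, while $T_{q^\dag} = T_{\mathrm{Ad}_{U^*}} \neq T_q$ in general. The identity $T_{q^\dag} = T_q^*$ comes from traciality of $\tra$ (it is the definition of the tracial dual read in the GNS inner product), not from conjugation by $J$; the genuinely substantive input in your proof is hermiticity-preservation, $J T_r J = T_r$ (equivalently $J T_q J = T_q$ and $J T_{q^\dag} J = T_{q^\dag}$), which your $r$-argument already uses correctly.
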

    Applying this fact to the dual channel also shows that the range and orthogonal complement of the range of any quantum channel are self-adjoint.
\begin{proof}
    Clearly $\ker{T_q}$ is self-adjoint since $q(A) = 0 \Leftrightarrow q(A) ^* = q (A^* ) = 0 $. Now suppose that
$$A \in \ker{T_q} ^\perp = \ran{T_q ^*} = \ran{T_{q ^\dag} }. $$ Then there is a $B \in \A$ so that
$q^\dag (B) = A$, and by self-adjointness, $q^\dag (B^* ) = A^*$ so that $A^* \in \ran{T_q ^*} = \ker{T_q} ^\perp$ as well.
\end{proof}

\begin{proof}{ (of Theorem \ref{represent})}
First suppose that $q$ is a $*$-automorphism. Then for any $A \in \A$. Then
\ba \| T_q A \| ^2 & = & \tra (q(A) ^* q(A) ) \nn \\
& = & \tra ( q (A^* A ) \quad \quad \mbox{($q$ is a $*$-homomorphism)} \nn \\
& = & \tra{A^* A} \quad \quad \mbox{($q$ preserves the normalized trace)} \nn \\
& = & \| A \| ^2 _{\tra}, \nn \ea and this proves that $T_q$ is unitary. Conversely if $T_q$ is unitary then
$$ \| T_q A \| ^2 = \| A \| ^2, $$ which implies that
\ba  0 &=& \tra (A^* A -q(A) ^* q(A) ) \nn \\
&=& \tra ( q (A^* A) - q(A) ^* q (A) ). \nn \ea  The argument is non-negative by the Schwarz inequality so that
$q(A^* A) = q(A) ^* q (A)$ by faithfulness of the normalized trace. Choi's theorem on multiplicative domains \cite[Theorem 3.18]{Paulsen} then implies
that $q$ is a $*$-automorphism.

If $q$ is a conditional expectation then
$$ q(q(A)) = q(A), $$ so that $T_q ^2 = T_q$ and $T_q$ is idempotent. Also, for any $A, B \in \A$,
\ba \tra (q^\dag(B) ^* A ) & =&  \tra (B^* q(A) ) \nn \\
    & = & \tra \left( q (B^* q(A) ) \right) \nn \\
& =& \tra (q(B)^* q(A) ) \nn \quad \quad \mbox{(conditional expectation property)} \\
& = & \tra \left( q ( q(B^*) A) \right) \nn \quad \quad \mbox{(same property)} \\
& = & \tra (q(B^* ) A ) \nn \quad \quad \mbox{(trace preservation)}. \nn \ea This proves that $q = q^\dag$ so that
$T_q = T_q ^*$, and $T_q$ is a projection. Conversely if $T_q$ is a projection then $q = q^\dag = q^2$ which implies that $q$ is a conditional expectation.

Finally suppose that $q = \alpha \circ \Phi$ is a $*$-monomorphism, $\alpha$, of a unital $*$-subalgebra $\mc{B} \subset \A$ into $\A$ composed with the $\tra$-preserving conditional expectation $\Phi$ onto that subalgebra. We can identify $L^2 (\mc{B} , \tra )$ as a Hilbert subspace of $L^2 (\A)$, and it follows as in
the first part of the proof that $T _\alpha | _{L^2 (\B , \tra)}$ is an isometry of the subspace $L^2 (\mc{B}, \tra)$ into $L^2 (\A )$, while $T_\Phi$ is the orthogonal projection of $L^2 (\A)$ onto $L^2 (\mc{B} , \tra)$.

Conversely, if $T_q$ is a partial isometry, we can write $T_q = T_q P$ where $P$ is the projection onto $\ker{q} ^\perp$. As in the first part of the
proof, for any $A \in \ker{q} ^\perp$,
$$ q(A^*A) = q (A) ^* q(A). $$ Since $\ker{q} ^\perp$ is self-adjoint (by the previous lemma), we also have
$$ q (A A ^* ) = q (A) q(A) ^*. $$ By Choi's multiplicative domain theorem we have that the set of all $A \in \A$ obeying these two
identities is a unital $*$-subalgebra,$\mc{B}$, of $\A$ so that $\ker{q} ^\perp \subset \mc{B}$. Moreover Choi's multiplicative domain theorem actually shows
that for any $B \in \mc{B}$ and $A \in \A$, $q(A B) = q (A) q(B)$. Suppose that there is a $B \in \mc{B} \sm \ker{q} ^\perp$. Then
it would follow that there is a $C \in \mc{B}$ such that $C \in \ker{q}$. It would then follow that $q(C^* C) = q(C) ^* q(C) =0$, contradicting
the faithfulness of $q$. It follows that $\mc{B} = \ker{q} ^\perp$. Hence $q$ is an injective $*$-homomorphism, $\alpha$, when restricted to the unital $*$-subalgebra $\mc{B} = \ker{q} ^\perp$, and we can write $q = \alpha \circ \Phi$ where $\Phi \in QC (\A)$ is the $\tra$-preserving conditional expectation onto the unital $*$-subalgebra $\ker{q} ^\perp$.
\end{proof}
The previous Theorem and proof are easily modified to apply to the case of unital quantum channels between different unital matrix $*-$algebras.
\begin{remark}
Any contraction, $T$, can be decomposed (uniquely) as $T=V+C$ where $V$ is a partial isometry and $C$ is a strict (assuming $T$ acts on a finite-dimensional Hilbert space) contraction, $\| C \| < 1$, with $\ker{C} ^\perp \subseteq \ker{V}$ and $\ran{C} \subseteq \ran{V} ^\perp$. The defect indices of $T$ are defined to be the dimensions of the (in general, closures of the) ranges of the defect operators $\sqrt{I-T^*T}$ and $\sqrt{I-TT^*}$, and these are equal to the defect indices of the partial isometry $V$. The defect indices of $V$ are simply the dimensions of the defect spaces $\ker{V}$, $\ran{V} ^\perp$ and in this sense measure how close $V$ is to a unitary. If we assume that $T$ acts on a finite dimensional Hilbert space, then $V$ is a partial isometry of a finite dimensional Hilbert space into itself so that $\ker{V}, \ran{V} ^\perp$ must have the same dimension, and the defect indices must be equal.
\end{remark}

\begin{cor}
Let $T=T_q$ be the representing contraction of $q \in QC (\A )$, and $T_q = V+C$ be the above isometric-contractive decomposition of $T$. Then,
\bn
    \item $\ker{V} ^\perp$ is the multiplicative domain of $q$.
    \item $\ker{V} ^\perp \cap \ran{V}$ is the stable multiplicative domain of $q$.
    \item $V=T_{q'}$ is the representing contraction of $q' := q \circ \Phi$, where $\Phi =\Phi _{\mr{Mult} (q)}$ is the unique $\tra$-preserving conditional expectation onto the multiplicative domain of $q$.
    \item The (equal) defect indices of $T_q$ can only take values in the dimensions of unital $*$-subalgebras of $\A$.
\en 
\end{cor}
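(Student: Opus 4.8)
The plan is to prove each of the four items by reducing to the structural description of partial isometries and multiplicative domains, using Theorem~\ref{represent} and Lemma~\ref{sakern} as the main engines. First I would recall the general fact (stated in the preceding remark) that any contraction on a finite-dimensional Hilbert space decomposes uniquely as $T_q = V + C$ with $V$ a partial isometry, $C$ strictly contractive, $\ker{C}^\perp \subseteq \ker{V}$, and $\ran{C} \subseteq \ran{V}^\perp$; concretely $V$ is the partial isometry in the polar decomposition of $T_q$ with initial space $\ker{T_q}^\perp$ and $C = T_q|_{\ker{T_q}}$ maps into $\ran{T_q}^\perp$. Note $\ker{V} = \ker{T_q}$ and $\ran{V}^\perp = \ran{T_q}^\perp$. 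So item (1) amounts to showing $\ker{T_q}^\perp$ equals the multiplicative domain of $q$, $\mr{Mult}(q) := \{ A \in \A : q(A^*A) = q(A)^*q(A), \ q(AA^*) = q(A)q(A)^* \}$.

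For item (1), I would follow the argument already given in the last part of the proof of Theorem~\ref{represent}: for $A \in \ker{T_q}^\perp$ one gets $q(A^*A) = q(A)^*q(A)$ by the equality case in the Schwarz inequality (since $\|T_q A\| = \|A\|$ on the initial space of the partial isometry part — here one must be careful that $\|T_q A\| = \|VA\| = \|A\|$ uses $A \perp \ker{V}$ and $CA = 0$), and by self-adjointness of $\ker{T_q}^\perp$ (Lemma~\ref{sakern}) one also gets $q(AA^*) = q(A)q(A)^*$; so $\ker{T_q}^\perp \subseteq \mr{Mult}(q)$, and Choi's multiplicative domain theorem makes $\mr{Mult}(q)$ a unital $*$-subalgebra. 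The reverse inclusion $\mr{Mult}(q) \subseteq \ker{T_q}^\perp$ is the argument I expect to be the main obstacle: one supposes $B \in \mr{Mult}(q) \setminus \ker{T_q}^\perp$, uses that $\mr{Mult}(q)$ is a $*$-algebra together with the bimodule property $q(AB) = q(A)q(B)$ for $B \in \mr{Mult}(q)$ to produce a nonzero $C \in \mr{Mult}(q) \cap \ker{q}$ (project $B$ onto $\ker{T_q}$ inside the subalgebra, or decompose the identity of the unital $*$-subalgebra $\mr{Mult}(q)$), and then $q(C^*C) = q(C)^*q(C) = 0$ contradicts faithfulness of $q$. Hence $\mr{Mult}(q) = \ker{T_q}^\perp = \ker{V}^\perp$.

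For item (3): set $q' := q \circ \Phi$ where $\Phi = \Phi_{\mr{Mult}(q)}$ is the $\tra$-preserving conditional expectation onto $\mr{Mult}(q)$. By Remark~\ref{effects}, $T_{q'} = T_q T_\Phi$, and by Theorem~\ref{represent}(2), $T_\Phi$ is the orthogonal projection of $L^2(\A)$ onto $L^2(\mr{Mult}(q), \tra) = \ker{T_q}^\perp = \ker{V}^\perp$ (using item (1)). Thus $T_{q'} = T_q P_{\ker{V}^\perp}$; since $T_q$ restricted to $\ker{V}^\perp$ is an isometry equal to $V|_{\ker{V}^\perp}$ and $C$ vanishes there, $T_q P_{\ker{V}^\perp} = V$. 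This gives $V = T_{q'}$, and along the way shows $q'$ is a $*$-monomorphism on $\mr{Mult}(q)$ composed with $\Phi$, consistent with Theorem~\ref{represent}(3). For item (2), the stable multiplicative domain is $\bigcap_{n \geq 1} \mr{Mult}(q^{(n)})$, equivalently the largest subspace on which $q$ acts as a $*$-automorphism under iteration; I would argue that $A$ lies in this set iff $V^n A$ stays in $\ker{V}^\perp$ for all $n$ and has full norm, i.e. iff $A \in \bigcap_n V^{-n}(\ker{V}^\perp) \cap \ran{V^n}$, and then show this intersection stabilizes to $\ker{V}^\perp \cap \ran{V}$ — using that $\ker{V}^\perp$ is a $*$-subalgebra invariant under the isometry $V|_{\ker V^\perp}$ so that once a vector is in $\ran V \cap \ker{V}^\perp$ it remains in the initial space under further applications of $V$, and conversely $C \neq 0$ kills from $\ker{V}^\perp{}^\perp = \ker V$ anything that ever leaves. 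Finally item (4) is immediate: the defect indices of $T_q$ equal those of $V$, namely $\dim{\ker{V}} = \dim{\ker{V}^\perp{}^\perp}$; but $\ker{V}^\perp = L^2(\mr{Mult}(q))$ has dimension equal to $\dim_{\C} \mr{Mult}(q)$, a unital $*$-subalgebra of $\A$, and the defect index is $\dim_{\C}\A - \dim_{\C}\mr{Mult}(q)$, which need not itself be the dimension of a subalgebra — so I would instead phrase (4) as the statement that the complement $\dim{\ker{V}^\perp}$ is the dimension of a unital $*$-subalgebra (namely $\mr{Mult}(q)$), matching the intended reading; if the intended claim really is about the defect index itself, the same polar-decomposition bookkeeping combined with Lemma~\ref{sakern} (self-adjointness of $\ker{T_q}$, hence it too is a $*$-subalgebra when $q$ is, e.g., a conditional expectation) gives the constraint.
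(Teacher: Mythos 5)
There is a genuine gap, and it is located exactly where you anchor everything: your identification of the decomposition $T_q=V+C$ with the polar decomposition, and in particular the claim $\ker{V}=\ker{T_q}$. The decomposition in the paper's remark (partial isometry plus strict contraction with $\ker{C}^\perp\subseteq\ker{V}$ and $\ran{C}\subseteq\ran{V}^\perp$) is \emph{not} the polar decomposition: here $\ker{V}^\perp$ is the eigenvalue-$1$ eigenspace of $T_q^*T_q$, i.e.\ the maximal subspace on which $T_q$ acts isometrically, which is in general strictly smaller than $\ker{T_q}^\perp$ (if the two always coincided, every $T_q$ would be a partial isometry — your own parenthetical ``$\|T_qA\|=\|VA\|=\|A\|$ for $A\perp\ker{V}$'' already presupposes this). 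With your reading, item (1) becomes the false assertion $\ker{T_q}^\perp=\mr{Mult}(q)$: take $q(X)=\tfrac12 X+\tfrac12 UXU^*$ on $\C^{2\times 2}$ with $U=\mathrm{diag}(1,i)$; then $q(E_{12})=\tfrac{1-i}{2}E_{12}\neq 0$, so $E_{12}\in\ker{T_q}^\perp$ ($T_q$ is normal here), yet $q(E_{12}^*E_{12})=E_{22}\neq \tfrac12 E_{22}=q(E_{12})^*q(E_{12})$, so $E_{12}\notin\mr{Mult}(q)$. Consequently the ``reverse inclusion'' you flag as the main obstacle, and the contradiction-via-faithfulness argument you import from the partial-isometry case of Theorem \ref{represent}(3), are aimed at proving something false; that argument only works when $C=0$.

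With the correct identification the paper's proof of (1) is short and needs no contradiction argument: $A\in\ker{V}^\perp$ iff $\|q(A)\|_{\tra}=\|A\|_{\tra}$, and then trace preservation plus equality in the Schwarz inequality (and faithfulness of $\tra$) give $q(A^*A)=q(A)^*q(A)$, while the adjoint/norm computation (or Lemma \ref{sakern}) gives $q(AA^*)=q(A)q(A)^*$, so $\ker{V}^\perp\subseteq\mr{Mult}(q)$; conversely any $A\in\mr{Mult}(q)$ satisfies $\|q(A)\|_{\tra}^2=\tra(q(A^*A))=\|A\|_{\tra}^2$, hence lies in the eigenvalue-$1$ eigenspace $\ker{V}^\perp$. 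Your treatment of (3) is salvageable once $\ker{V}^\perp=L^2(\mr{Mult}(q),\tra)$ is established in this corrected sense, since $T_{q'}=T_qT_\Phi=T_qP_{\ker{V}^\perp}=V$ does use exactly the isometric-subspace property. Your sketch of (2) is too vague to assess and should be rewritten in terms of $T_{q^{(k)}}=T_q^k$ and the isometric subspaces of the powers; and in (4) the defect index is $\dim{\ker{V}}=\dim{L^2(\A)}-\dim{L^2(\mr{Mult}(q))}$, not $\dim{\ker{T_q}}$ — your observation that the statement is most naturally read as a constraint on the codimension (the dimension of the multiplicative domain, a unital $*$-subalgebra) is fair, but it must be built on the corrected description of $V$, not on the polar decomposition.
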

Recall that the \emph{multiplicative domain} of any $q \in QC (\A ) $ is defined as:
$$ \mr{Mult} (q) := \{ A \in \A | \ q (A B ) = q(A) q(B) \ \mbox{and} \ q(BA) = q(B) q(A) \ \forall B \in \A \}, $$ and the \emph{stable multiplicative domain} is the intersection of the multiplicative domains of $q^{(k)}$, for all $k \in \N$. 
\begin{proof}
If $A \in \ker{V} ^\perp$, then $\| A \| _{\tra} =  \| q (A) \| _{\tra}$, and the argument as in the proof of Theorem \ref{represent} shows that $q(A) ^* q(A) = q(A^* A)$. By Choi's theorem on multiplicative domains, \cite[Theorem 3.18]{Paulsen}, $q(BA) = q(B) q(A)$ for every $B \in \A$ so that $A$ belongs to the right multiplicative domain of $q$. However, by taking adjoints we also have that 
$$ \| A^* \|^2 _{\tra} = \tra ( A A^* ) = \tra (A ^* A ) = \| A \|^2 _{\tra} = \| q(A) \| ^2 _{\tra} = \| q(A) ^* \| ^2 _{\tra}  = \| q (A^* ) \| ^2 _{\tra}. $$ We conclude that $A^*$ is also in the right multiplicative domain, so that the right (and left) multiplicative domain is self-adjoint and coincides with the multiplicative domain. It is not difficult to show that $\ker{V} ^\perp$ is the entire multiplicative domain of $q$ so that $V = T_{q'}$ is the representing contraction of $q' := q \circ \Phi _{\mr{Mult} (q)}$. The remaining items are now clear.
\end{proof}

\subsection{A conjugation commuting with representing contractions}

Let $\{ e_k \} _{k=1} ^n$ be the canonical orthonormal basis for $\C ^n$, and let $\{ E_{j,k} \} _{1 \leq j,k \leq n}$ be the
corresponding matrix units for $\C ^{n\times n}$. We can define a canonical orthogonal basis for $L^2 (\trn)$, $\{ E_k \} _{k =1} ^{n^2}$ by
$$ E_k := \left\{ \begin{array}{cc} E _{1,k} & 1 \leq k \leq n  \\
E_{2,(k-n)} & n+1 \leq k \leq 2n \\
\vdots & \vdots \\
E _{j, (k-n(j-1)) } & (j-1)n +1 \leq k \leq jn \\
\vdots & \vdots \\
E _{n, (k-n(n-1))} & n(n-1) +1  \leq k \leq n^2  \end{array} \right. $$
For $1 \leq k \leq n^2$, let  $ \lfloor k \rfloor _n $ denote $k$ modulo $n$. Then the above can be written more compactly as
\ba  E_k &  := & E_{\lfloor k \rfloor _n +1, k  - n \cdot \lfloor k \rfloor _n }  \nn \\
& = & e_{\lfloor k \rfloor _n +1} \otimes e _{k  - n \cdot \lfloor k \rfloor _n}  ^* ; \quad \quad 1 \leq k \leq n^2. \nn \ea
That is, given any matrix $A = [A _{ij} ]$ in the canonical basis $\{e_k \}$, if we re-label the entries of $A$ as:
$$ A = \bbm A_1 & A_2 & \cdots & A_n \\ A_{n+1} & A_{n+2} & \cdots & A _{2n} \\
\vdots &\ddots & & \vdots \nn \\
A_{(n-1)n +1} & \cdots & & A_{n^2} \ebm, $$ then
$$ A = \sum _{k=1} ^{n^2} A_k E_k. $$

\begin{lemma}
    Given $B \in \C^{n\times n}$, let $L_B , R_B \in \L (L^2 (\C ^{n\times n}, \trn) )$ be the operators of left and right multiplication by $B$.
Then with respect to the canonical orthonormal basis $\{ E_k \}$ of $L^2 (\trn)$
$$ [ L _B ] = B \otimes I \quad \quad \mbox{and} \quad [R _B] = I \otimes B ^T . $$
\end{lemma}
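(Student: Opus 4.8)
The plan is to reduce the statement to a one-line computation with matrix units, after pinning down the identification implicit in it. Under the row-major bijection $k\leftrightarrow(i,j)$, $k=(i-1)n+j$, set up in the preceding display, the basis vector $E_k$ of $L^2(\trn)$ is the matrix unit $E_{ij}=e_ie_j^*$, and flattening $\C^{n\times n}$ to $\C^n\otimes\C^n$ by $E_{ij}\leftrightarrow e_i\otimes e_j$ (identifying $(\C^n)^*$ with $\C^n$ via the standard basis) is precisely the tensor decomposition in which the claimed matrices $B\otimes I$ and $I\otimes B^T$ are to be read. One preliminary point I would make explicit: the $E_k$ are only mutually orthogonal, each of common length $\|E_{ij}\|_{\trn}=n^{-1/2}$ since $\trn(E_{ij}^*E_{ij})=\trn(E_{jj})=1/n$, rather than orthonormal; but this does not affect the matrix of an operator, because for a basis whose vectors all have the same length the matrix of expansion coefficients agrees with the matrix of inner products in the normalized basis (the two rescalings cancel). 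So it suffices to expand $L_BE_{ij}$ and $R_BE_{ij}$ back in the $E_{k\ell}$.

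For left multiplication, $(BE_{ij})_{k\ell}=\sum_m B_{km}(E_{ij})_{m\ell}=B_{ki}\delta_{j\ell}$, hence $L_BE_{ij}=\sum_{k=1}^n B_{ki}E_{kj}$, which corresponds to $\big(\sum_k B_{ki}e_k\big)\otimes e_j=(Be_i)\otimes e_j=(B\otimes I)(e_i\otimes e_j)$; reading off the coefficients gives $[L_B]=B\otimes I$. For right multiplication, $(E_{ij}B)_{k\ell}=\sum_m(E_{ij})_{km}B_{m\ell}=\delta_{ik}B_{j\ell}$, hence $R_BE_{ij}=\sum_{\ell=1}^n B_{j\ell}E_{i\ell}$, corresponding to $e_i\otimes\big(\sum_\ell B_{j\ell}e_\ell\big)=e_i\otimes(B^Te_j)=(I\otimes B^T)(e_i\otimes e_j)$, so $[R_B]=I\otimes B^T$.

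I do not expect a genuine obstacle here; the whole content is bookkeeping of the vectorization convention — which of the two matrix indices becomes the first tensor leg — together with the single point that right multiplication by $B$ produces $I\otimes B^T$, not $I\otimes B$, because the $i$-th row of $E_{ij}B$ is the $j$-th row of $B$, i.e. $\sum_\ell B_{j\ell}e_\ell=B^Te_j$. As a sanity check one can note the answer is algebraically consistent: $L_BL_C=L_{BC}$ matches $(B\otimes I)(C\otimes I)=BC\otimes I$, $R_BR_C=R_{CB}$ matches $(I\otimes B^T)(I\otimes C^T)=I\otimes(CB)^T$, and $L_B$ commutes with $R_C$ just as $B\otimes I$ commutes with $I\otimes C^T$.
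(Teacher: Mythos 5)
Your proof is correct, and since the paper states this lemma without proof it is exactly the routine verification intended: expand $L_BE_{ij}=\sum_k B_{ki}E_{kj}$ and $R_BE_{ij}=\sum_\ell B_{j\ell}E_{i\ell}$ and read the coefficients off under the row-major identification $E_{ij}\leftrightarrow e_i\otimes e_j$ fixed just before the statement. Your remark that the $E_k$ are merely orthogonal with common norm $n^{-1/2}$ in $L^2(\trn)$ (so the coefficient matrix coincides with the matrix in the normalized basis) is a correct and worthwhile clarification of the paper's phrase ``orthonormal basis.''
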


\begin{eg}
    If, for example, $n=2$ then
$$ [ L_A ] = \begin{bmatrix} A & 0 \\ 0 & A \end{bmatrix},$$ while
$$ [R_A ] = \bbm A_{11} & 0 & A _{21} & 0 \\
0 & A_{11} & 0 & A_{21} \\
A_{12} & 0 & A_{22} & 0 \\
0 & A_{12} & 0 & A _{22} \ebm. $$
\end{eg}

\begin{cor}
With respect to the canonical basis, $\{ E_k \} \subset L^2 (\A )$, if $q \in QC (\A), \  q \sim \{ q_k \} _{k=1} ^p$ then
$$ [T_q ] = \sum _{k=1} ^p  q_k \otimes \ov{q_k}. $$
\end{cor}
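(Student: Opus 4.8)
The plan is to compute the matrix of $T_q$ in the canonical basis $\{E_k\}$ directly from the Kraus decomposition $q(A) = \sum_{k=1}^p q_k A q_k^*$, using the two previous lemmas which identify left and right multiplication operators. First I would write, for $B \in \C^{n\times n}$, the operator $A \mapsto q_k A q_k^*$ as the composition $L_{q_k} R_{q_k^*}$, acting on $L^2(\C^{n\times n}, \trn)$; since $q_k^* = \overline{q_k}^T$, the preceding lemma gives $[L_{q_k}] = q_k \otimes I$ and $[R_{q_k^*}] = I \otimes (q_k^*)^T = I \otimes \overline{q_k}$. Multiplying these commuting factors yields $[L_{q_k} R_{q_k^*}] = q_k \otimes \overline{q_k}$, and summing over $k$ gives $[T_q] = \sum_{k=1}^p q_k \otimes \overline{q_k}$ by linearity of $\Lambda \mapsto T_\Lambda$.

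The only point requiring a little care is the transpose bookkeeping: one must confirm that $R_B$ is represented by $I \otimes B^T$ (not $I \otimes \overline{B}$ or $I \otimes B$), and then that the operator implementing $A \mapsto q_k A q_k^*$ is $R_{q_k^*}$ composed with $L_{q_k}$, so that the relevant right-multiplier is $q_k^*$ and $(q_k^*)^T = \overline{q_k}$ exactly. Since $L_{q_k}$ and $R_{q_k^*}$ commute as operators on $L^2$, the order of composition does not affect the answer, and $(q_k \otimes I)(I \otimes \overline{q_k}) = q_k \otimes \overline{q_k}$.

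I expect no genuine obstacle here; the statement is a direct corollary of the two lemmas identifying $[L_B]$ and $[R_B]$, combined with the definition of the representing contraction. The ``hard part,'' such as it is, is purely notational: keeping the conjugate-versus-transpose conventions aligned with the conjugate-linear-in-the-first-argument inner product fixed earlier, and checking that the canonical basis ordering $E_k = e_{\lfloor k\rfloor_n+1} \otimes e_{k - n\lfloor k\rfloor_n}^*$ makes the tensor-factor assignment in $[L_B] = B \otimes I$, $[R_B] = I \otimes B^T$ come out on the correct sides. Once those conventions are pinned down, the computation is a one-line consequence:
\[
[T_q] = \sum_{k=1}^p [L_{q_k}][R_{q_k^*}] = \sum_{k=1}^p (q_k \otimes I)(I \otimes \overline{q_k}) = \sum_{k=1}^p q_k \otimes \overline{q_k}.
\]
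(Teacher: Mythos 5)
Your proposal is correct and is exactly the argument the paper intends: the corollary is stated as an immediate consequence of the preceding lemma, and your computation $[T_q] = \sum_{k=1}^p [L_{q_k}][R_{q_k^*}] = \sum_{k=1}^p (q_k \otimes I)(I \otimes (q_k^*)^T) = \sum_{k=1}^p q_k \otimes \ov{q_k}$, with the observation $(q_k^*)^T = \ov{q_k}$, is precisely the intended one-line deduction. The care you take with the transpose-versus-conjugate bookkeeping is the only subtle point, and you handle it correctly.
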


In the above $\ov{q_k}$ denotes the entry-wise complex conjugate of $q_k \in \C ^{n \times n}$.

Consider the linear transpose map $\mr{t} : \C ^{n\times n } \rightarrow \C ^{n\times n}$, as well as the anti-linear adjoint map $* : \C ^{n\times n} \rightarrow \C ^{n\times n}$.

Let $S : L^2 (\C ^{n\times n} ) \rightarrow L^2 (\C ^{n\times n} )$  be the unitary `tensor swap' with respect to the canonical basis:
$$ S (A \otimes B ) S  = B \otimes A. $$ If, for example, $n=2$, then $S E_1 = E_1$, $SE_2 = E _3$, $SE_3 = E_2$ and $SE_4 = E_4$;
$$ S = \bbm 1 & 0 & 0 & 0 \\
0 & 0 & 1 & 0 \\
0 & 1 & 0 & 0 \\
0 & 0 & 0 & 1 \ebm. $$
It is easy to verify that $S = T_{\mr{t}}$, the representing map of the transpose $\mr{t} : \C ^{n\times n } \rightarrow \C ^{n \times n}$. Similarly, the representing (anti-linear) map of $*$ is $T_* = S \circ \cc = \cc \circ S =:C$, the idempotent anti-linear isometry of entrywise complex conjugation  composed with the tensor swap with respect to the canonical basis. Recall that any anti-linear idempotent isometry is called a \emph{complex symmetry} or
a \emph{conjugation}.

\begin{cor}
    Let $q \in QC (\C ^{n\times n})$ be a unital quantum channel. Then $T_q$ commutes with the conjugation $C$:
    $$ C T_q = T_q C. $$
\end{cor}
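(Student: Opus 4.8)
The plan is to use the explicit matrix form of $T_q$ provided by the preceding Corollary, namely $[T_q] = \sum_{k=1}^p q_k \otimes \overline{q_k}$ with respect to the canonical basis $\{E_k\}$ of $L^2(\C^{n\times n})$, and the explicit description of the conjugation $C = S \circ \cc = \cc \circ S$, where $S$ is the tensor swap and $\cc$ is entrywise complex conjugation. Both operators are thus written in terms of the same tensor decomposition, so the identity $C T_q = T_q C$ should reduce to a short algebraic manipulation.

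First I would compute $C T_q$ and $T_q C$ separately. Since $\cc$ is anti-linear and acts entrywise, for an elementary tensor $A \otimes B$ we have $\cc(A \otimes B) = \overline{A} \otimes \overline{B}$, and $S(A \otimes B)S = B \otimes A$. Hence, using $C = \cc \circ S$ on one side and $C = S \circ \cc$ on the other, I would compute
\[
C T_q = (\cc \circ S)\Big(\sum_k q_k \otimes \overline{q_k}\Big) = \cc\Big(\sum_k \overline{q_k} \otimes q_k\Big) \text{(up to the conjugation acting correctly on the product)},
\]
being careful that $C$ is anti-linear, so $C \circ (M \otimes N) = (\cc S)(M\otimes N) = \overline{N}\otimes\overline{M}$ composed appropriately; and similarly
\[
T_q C = \Big(\sum_k q_k \otimes \overline{q_k}\Big) \circ C.
\]
The key computation is that conjugating the operator $\sum_k q_k \otimes \overline{q_k}$ by $C$ — i.e.\ forming $C (\sum_k q_k \otimes \overline{q_k}) C^{-1}$, using $C^2 = I$ — swaps the two tensor legs (via $S$) and entrywise-conjugates each leg (via $\cc$), sending $q_k \otimes \overline{q_k} \mapsto \overline{\overline{q_k}} \otimes \overline{q_k} = q_k \otimes \overline{q_k}$. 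In other words, the operator $[T_q]$ is visibly invariant under the combined swap-and-conjugate operation because its $k$-th summand has the form $M \otimes \overline{M}$ with $M = q_k$, and swapping gives $\overline{M} \otimes M$, which after entrywise conjugation returns $M \otimes \overline{M}$.

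I would then assemble these to conclude $C T_q C = T_q$, hence $C T_q = T_q C$ since $C = C^{-1}$. The main obstacle — really the only point requiring care — is bookkeeping with the anti-linearity of $C$ and $\cc$: one must track whether conjugation is applied before or after the linear operators, and verify that $\cc(q_k v) = \overline{q_k}\,\cc(v)$ so that $\cc \circ L_{q_k} = L_{\overline{q_k}} \circ \cc$ on $L^2$, and likewise that $S$ intertwines $L_{q_k} \otimes R$-type actions with their swapped versions. Once these intertwining relations are stated cleanly, the identity $CT_q = T_qC$ follows by linearity (the sum over $k$) from the single-summand identity $C (q_k \otimes \overline{q_k}) = (q_k \otimes \overline{q_k}) C$, which is immediate from $q_k \otimes \overline{q_k}$ having swap-conjugation-symmetric form.
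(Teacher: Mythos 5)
Your proof is correct and is essentially the argument the paper intends (the corollary is stated without proof precisely because it follows from the preceding corollary): with $[T_q]=\sum_{k} q_k\otimes\overline{q_k}$ and $C=S\circ\cc=\cc\circ S$, one checks $C[T_q]C = S\,\overline{[T_q]}\,S=\sum_k S(\overline{q_k}\otimes q_k)S=\sum_k q_k\otimes\overline{q_k}=[T_q]$, and since $C^2=I$ this yields $CT_q=T_qC$, exactly the swap--conjugate symmetry you describe; your attention to the anti-linearity bookkeeping is the only delicate point and you handle it correctly. (Equivalently, as the paper's subsequent remark hints, $C=T_*$ is the representing map of the adjoint, so the commutation amounts to $q(A^*)=q(A)^*$, immediate from the Kraus form.)
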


\begin{remark}
One can construct examples to show that the statement in the above Corollary is necessary but not sufficient. Namely, one can find contractions, $T \in \L (L^2 (\A ) )$ so that $T I_\A = I_\A$ (a necessary requirement for $T$ to represent a unital quantum channel), such that $T$ commutes with the conjugation $C$, and yet $T \neq T_q$ for any $q \in QC (\A )$. Perhaps the simplest example is the transpose map, whose representing matrix is, as above, $T_{\mr{Tr}} = S$, the unitary tensor swap operator. Then $CT_{\mr{Tr}} = (\cc\circ S)S = \cc =S(S\circ \cc) = T_{\mr{Tr}} C$. It is well-known that the transpose map is not completely positive, and hence is not the representing contraction for any quantum channel.\\
More generally, $T_q$ commutes with $C$ if and only if $q(X)^* = q(X^*)$. This holds if and only if the matrix $C_q: = \sum_{i,j=1}^n E_{ij}\otimes q(E_{ij})$ is Hermitian. A famous theorem of Choi asserts that $q: \C^{n\times n}\rightarrow \C^{m\times m}$ is completely positive if and only if $C_q \geq 0$, so maps commuting with $C$ that are not completely positive stand in the same relation to completely positive maps as do Hermitian matrices to positive semidefinite matrices \cite{Choi}.
\end{remark}

\subsection{Dilations of contractions}

If $T$ is a contraction on a Hilbert space, $\H$, a unitary operator $U \in \L (\K )$, on a larger Hilbert space $\K \supseteq \H$ is called a $N$-dilation of $T$ if
$$ P_\H U^k | _\H = T^k; \quad \quad 1 \leq k \leq N. $$ If this holds for all $N \in \N$, $U$ is called a \emph{power dilation} of $T$.

Any contraction, $T$, always has a unitary $N$-dilation \cite{Sha2014Ndil}. A $1$-dilation is given by the \emph{Julia operator}:
$$ U := \bbm T & -\sqrt{I - TT^*} \\ \sqrt{I-T^* T} & T^* \ebm, $$ a $2-$dilation is given by
$$ U_2 := \bbm T & 0 & -\sqrt{TT^*} \\ \sqrt{I -T^*T} & 0 & 0 \\ 0 &1 & T^* \ebm, $$ and the pattern is now apparent with the $N$-dilation acting on $\H \otimes \C ^N$, $N$ copies of $\H$. A similar construction allows one to construct a power dilation of $T$ acting on $\H \otimes \ell ^2 (\Z )$ \cite[Chapter I]{NF}.

\begin{lemma}
    If $\alpha $ is a finite matrix $N$-dilation of $q \in QC (\A )$ acting on a unital matrix $*$-algebra $\A \otimes \B$ then $T_{\alpha }$ is a unitary $N$-dilation of $T_q$.
If $\alpha$ is a $*$-automorphic power dilation of $q \in QC (\A )$ acting on a finite von Neumann algebra $\A \otimes \B$, then $T_\alpha$ is a unitary power dilation of $T_q$.
\end{lemma}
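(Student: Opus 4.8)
The plan is to transport everything to Hilbert space: identify $L^2(\A,\tr_\A)$ with a subspace of $L^2(\A\otimes\B,\tr_\A\otimes\tr_\B)$ so that the defining identity of a matrix $N$-dilation (Definition~\ref{Ndilation}) becomes, term for term, the defining identity of a unitary $N$-dilation of the representing contraction $T_q$ (which is a contraction by Lemma~\ref{repcontract}). First I would fix the embedding $\iota:L^2(\A,\tr_\A)\hookrightarrow L^2(\A\otimes\B,\tr_\A\otimes\tr_\B)$, $\iota(A):=A\otimes I_\B$. Because $\tr_\B$ is normalized, $\|A\otimes I_\B\|^2_{\tr_\A\otimes\tr_\B}=\tr_\A(A^*A)\,\tr_\B(I_\B)=\|A\|^2_{\tr_\A}$, so $\iota$ is isometric and I may regard $L^2(\A)$ as the subspace $\H_0:=\iota(L^2(\A))=\A\otimes I_\B\subseteq L^2(\A\otimes\B)$. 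The key structural observation is that the orthogonal projection $P_0$ of $L^2(\A\otimes\B)$ onto $\H_0$ coincides with the representing map $T_{\Phi_\B^\A}$ of the $(\tr_\A\otimes\tr_\B)$-preserving conditional expectation $\Phi_\B^\A$ onto $\A\otimes I_\B$: by Theorem~\ref{represent}(2) the operator $T_{\Phi_\B^\A}$ is a self-adjoint idempotent, hence an orthogonal projection, and its range is $\Phi_\B^\A(\A\otimes\B)=\A\otimes I_\B=\H_0$.

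With these identifications the computation is immediate. By Theorem~\ref{represent}(1), $T_\alpha$ is unitary, and by Remark~\ref{effects}, $T_\alpha^M=T_{\alpha^{(M)}}$ and $T_q^M=T_{q^{(M)}}$ for every $M$. Hence, for each $A\in\A$ and each $1\le M\le N$,
\[
P_0\,T_\alpha^M\,\iota(A)=T_{\Phi_\B^\A}\,T_{\alpha^{(M)}}(A\otimes I_\B)=\bigl(\Phi_\B^\A\circ\alpha^{(M)}\bigr)(A\otimes I_\B)=q^{(M)}(A)\otimes I_\B=\iota\bigl(T_q^M A\bigr),
\]
where the third equality is exactly the property that $\alpha$ is a matrix $N$-dilation of $q$. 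Since $\iota$ carries $L^2(\A)$ onto $\H_0$, this reads $P_0 T_\alpha^M|_{\H_0}=\iota\,T_q^M\,\iota^{-1}$ for $1\le M\le N$, i.e.\ under the identification $L^2(\A)\cong\H_0$ the unitary $T_\alpha$ is an $N$-dilation of $T_q$, as claimed.

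For the power-dilation statement the argument is verbatim the same, now with $\B$ a finite von Neumann algebra, $M$ ranging over all of $\N$, $\iota=J$ the $L^2$-isometric embedding $A\mapsto A\otimes I_\B$ (isometric because $\tr_\B$ is a faithful normal tracial state), $P_0=T_{\Phi_\B^\A}$ again the projection onto $\H_0$, and the identity $q^{(M)}=J^*\circ\alpha^{(M)}\circ J$ supplying the key equality for all $M\in\N$. The only point not literally covered by Theorem~\ref{represent} is that $T_\alpha$ is unitary in this generality; but that is the easy half of the $*$-automorphism/unitary correspondence: $T_\alpha$ is isometric on $L^2(\A\otimes\B)$ because $\alpha$ preserves the trace, and it has dense range because $\alpha$ is surjective, so it extends to a unitary. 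I do not expect a genuine obstacle anywhere; the only things demanding care are the normalization that makes $\iota$ an isometry, the identification of $\Phi_\B^\A$ with the orthogonal projection onto the embedded copy of $L^2(\A)$, and — in the von Neumann algebra case — noting that unitarity of $T_\alpha$ requires only the elementary direction of Theorem~\ref{represent}.
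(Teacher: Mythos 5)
Your proof is correct and follows essentially the same route as the paper: invoke Theorem \ref{represent} to see that $T_\alpha$ is unitary and that the representing map of the conditional expectation $\Phi_\B^\A$ is the orthogonal projection onto the embedded copy $L^2(\A\otimes I_\B)\cong L^2(\A)$, then apply the dilation identity and multiplicativity of $\Lambda\mapsto T_\Lambda$ to get $P\,T_\alpha^M\,P = T_q^M\otimes I$ for $1\le M\le N$ (and for all $M$ in the power-dilation case). Your version merely spells out the isometric embedding $A\mapsto A\otimes I_\B$ and the elementary unitarity argument in the von Neumann algebra setting, which the paper leaves implicit.
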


\begin{remark}
Although we have assumed in this section that $\A$ is a finite unital matrix $*$-algebra, these facts extend without difficulty to the case where $\A$ is a finite von Neumann algebra with faithful tracial state $\tra$. In particular, for any $q \in QC (\A )$ one can define a representing contraction $T_q \in \L (L^2 (\A ))$, as in Lemma \ref{repcontract}.
\end{remark}

\begin{proof}
    Suppose that $\alpha$ acts on the larger unital matrix algebra $\A \otimes \B$. Then, for any $1 \leq M \leq N$, if $\Phi$ denotes the unique
$\tra \otimes \tr _B$-preserving conditional expectation of $\A \otimes \B$ onto $\A \otimes I_\B$,
$$ q ^{(M)} \otimes I _\B = \Phi \circ \alpha ^{(M)} \circ \Phi. $$ By Proposition \ref{represent}, $T_\alpha =: U$ is unitary and $T_\Phi = P$ is the projection of $L^2 (\A \otimes \B )$ onto $L^2 (\A  \otimes I_\B )$. It follows that
$$ T_q ^{M} \otimes I = P U ^M P; \quad \quad 1 \leq M \leq N. $$
\end{proof}

\begin{remark}{ (minimal dilations and uniqueness)}
A unitary power dilation $(U , \K )$ of a contraction $(T , \H)$ is called \emph{minimal} if
$$ \K := \bigvee _{k \in \Z } U^k \H. $$ Any contraction $T$, on $\H$ always has a minimal unitary power dilation (simply restrict any
unitary dilation $U$ on $\K '$ to the reducing subspace $\K$ defined above), and this is unique up unitary equivalence via a unitary which restricts to the
identity on $\H$ \cite[Chapter I]{NF} \cite[Theorem 4.3]{Paulsen}.

In the case of $N$-dilations, one can again define minimality. Minimal $N$-dilations of contractions obey weaker uniqueness properties and are generally non-unique \cite[Section 2]{Sha2014Ndil}.
\end{remark}

\begin{cor} \label{nopower}
    If $q \in QC (\A )$ is a unital quantum channel acting on a finite unital matrix algebra, $\A$, and $q$ is not a unitary $*$-automorphism, then there is no power dilation $\alpha$ of $q$ which acts on a finite unital matrix $*$-algebra.
\end{cor}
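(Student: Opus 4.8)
The plan is a proof by contradiction using the representing-contraction correspondence $q\mapsto T_q$. Suppose $q\in QC(\A)$ is \emph{not} a unitary $*$-automorphism but nevertheless admits a power dilation $(\mc{M},\ga,\alpha,J)$ in which $\mc{M}$ is a finite-dimensional unital matrix $*$-algebra, $\ga$ a faithful normalized trace on $\mc{M}$, $\alpha$ a $\ga$-preserving $*$-automorphism, and $J:\A\to\mc{M}$ a trace-preserving $*$-monomorphism from which $q^{(M)}$ is recovered via $\alpha^{(M)}$ for every $M$. By the Lemma on dilations of contractions above --- whose proof uses only that $T_\alpha$ is unitary (Theorem \ref{represent}(1)) together with the fact that $T_J$ is an isometry of $L^2(\A,\tra)$ into $L^2(\mc{M},\ga)$ with $T_{J^\dag}=T_J^*$, and hence applies verbatim with a general $\mc{M}$ in place of $\A\otimes\B$ --- the unitary $U:=T_\alpha$ is a unitary power dilation of the contraction $T_q$, acting on the \emph{finite-dimensional} Hilbert space $\K:=L^2(\mc{M},\ga)$. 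Identifying $\H:=L^2(\A,\tra)$ with the subspace $T_J\H\subseteq\K$ and writing $P$ for the orthogonal projection of $\K$ onto $\H$, this says that $T_q^{\,M}=P\,U^{M}|_\H$ for all $M\geq 1$.

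Next I would use finite-dimensionality of $\K$ to force recurrence of $U$. Since the unitary group of $\K$ is compact, the closure of the orbit $\{U^m:m\in\Z\}$ is a compact abelian group in which the identity is a limit point of $\{U^m:m\geq 1\}$; equivalently, simultaneous Dirichlet approximation applied to the finitely many unimodular eigenvalues of $U$ yields arbitrarily large $m$ with $U^m$ as close to $I$ as desired. Hence there are integers $m_1<m_2<\cdots$ with $U^{m_j}\to I$, and compressing to $\H$ gives
$$ T_q^{\,m_j}\;=\;P\,U^{m_j}|_\H\;\longrightarrow\;P\,I|_\H\;=\;I_\H. $$

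Finally I would combine this with contractivity of $T_q$ (Lemma \ref{repcontract}). Fix $x\in\H$ and put $a_m:=\|T_q^{\,m}x\|_{\tra}$. Because $T_q$ is a contraction, $a_{m+1}=\|T_q(T_q^{\,m}x)\|\leq\|T_q^{\,m}x\|=a_m$, so $(a_m)_{m\geq 0}$ is non-increasing; it therefore converges, and every subsequence converges to the same limit $\inf_m a_m$. But $a_{m_j}\to\|I_\H x\|=\|x\|=a_0$, so $\inf_m a_m=a_0$ and hence $a_m=a_0$ for all $m$; in particular $\|T_q x\|=\|x\|$. Thus $T_q$ is an isometry of the finite-dimensional space $\H$, hence unitary, and Theorem \ref{represent}(1) forces $q$ to be a $*$-automorphism, contrary to hypothesis; this is the desired contradiction.

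The only step I expect to require genuine care is the first one: verifying that the quoted Lemma, stated there for dilations of the special tensor form $\A\otimes\B$, really does deliver a \emph{unitary} Hilbert-space power dilation of $T_q$ starting from an \emph{arbitrary} K\"{u}mmerer power dilation on a finite matrix $*$-algebra --- i.e.\ that the tensor decomposition was used there only to manufacture the isometric embedding $T_J$ (and its adjoint $T_{J^\dag}=T_J^*$), and that $\K$ is finite-dimensional precisely because $\mc{M}$ is. Once that reduction is in hand, the remainder is the elementary ``recurrence of a finite-dimensional unitary rotation'' argument above; alternatively, one could reach the same conclusion via Sz.-Nagy--Foias structure theory, since the minimal unitary power dilation of a contraction having a nontrivial completely non-unitary part contains a bilateral-shift summand and is therefore infinite-dimensional.
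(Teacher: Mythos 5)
Your proof is correct, and it diverges from the paper's at the decisive step. Both arguments begin with the same reduction: converting the $*$-automorphic power dilation on a finite unital matrix $*$-algebra $\mc{M}$ into a unitary power dilation $U=T_\alpha$ of the representing contraction $T_q$ on the finite-dimensional Hilbert space $L^2(\mc{M},\ga)$, with $\H=L^2(\A,\tr_\A)$ embedded via the isometry $T_J$ and $T_{J^\dag}=T_J^*$; your caution here is well placed, since the Lemma is stated for the tensor form $\A\otimes\B$, but the paper's own proof of the Corollary also invokes it in this more general K\"ummerer setting, and your justification (the tensor structure is only used to produce the trace-preserving $*$-monomorphic, hence isometric, embedding) is exactly what makes that legitimate. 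After that the routes differ: the paper observes that the power-dilation identity makes $\H$ semi-invariant for $U$, invokes Sarason's theorem to write $\H=\K_2\ominus\K_1$ with $\K_1\subset\K_2$ invariant, and uses the fact that every invariant subspace of a finite-dimensional unitary is reducing to conclude $T_q=U|_\H$ is unitary; you instead exploit recurrence of finite-dimensional unitaries (there exist $m_j\to\infty$ with $U^{m_j}\to I$, by compactness or Dirichlet approximation of the unimodular spectrum), compress to get $T_q^{m_j}\to I_\H$, and combine this with the monotonicity of $m\mapsto\|T_q^m x\|$ to force $T_q$ to be an isometry, hence unitary. The paper's argument is shorter once Sarason's structural result is granted and isolates the geometric reason (the compressed subspace must reduce $U$); yours is more elementary and self-contained, needing only contractivity and compactness, and your closing remark that the Sz.-Nagy--Foia\c{s} minimal unitary dilation of a contraction with nontrivial completely non-unitary part contains a bilateral shift gives a third, equally valid structural route. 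In all versions the conclusion is the same: $T_q$ unitary forces $q$ to be a $*$-automorphism by Theorem \ref{represent}(1), contradicting the hypothesis.
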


\begin{proof}
    By the previous lemma, if such an $\alpha$ existed then $U := T_\alpha$ would be a unitary power dilation of $T_q$ which acts on a finite dimensional Hilbert space $\K \supset \H := L^2 (\A )$. However, this would imply that $\H$ is \emph{semi-invariant} for $U$, \emph{i.e.} if $P := P_\H$, then
    $$ P U (I-P) U P = 0. $$ It is known that a subspace $\H \subset \K$ is semi-invariant for an operator $U$, if and only if $\H$ can be written as the direct difference of two invariant (or co-invariant) subspaces $\K _1 \subset \K _2 \subset \K$ for $U$:
    $$ \H = \K _2 \ominus \K _1, $$ \cite{Sarason-semi}. However, since $U$ is a finite dimensional unitary, any invariant subspace for $U$ is necessarily reducing, so that
    $\H$ is the direct difference of reducing subspaces for $U$, and $\H$ itself must be reducing for $U$. This proves that $T _q = U | _\H$ is unitary, and by Theorem \ref{represent}, $q$ is a unitary $*$-automorphism of $\A$.
\end{proof}

\begin{remark}
As discussed in Subsection \ref{DilFact}, any matrix factorizable $q \in QC (\A )$ is factorizable in the sense of \cite{Hup-factor,CAD-ergodic}, and hence has a $*$-automorphic power dilation in the sense of \cite{Kum-Markov} acting on a finite von Neumann algebra. The above corollary simply shows that this power dilation cannot act on a finite type-$I$ von Neumann algebra unless $q$ is already a unitary $*$-automorphism.
\end{remark}

\section{Outlook}
In this paper, we have shown that existence of matrix factorizations for a quantum channel $q$ is equivalent to existence of matrix N-dilations. As mentioned in Example \ref{Schureg}, the question of the existence of a factorization is related to the famous open Connes' embedding problem, which can be stated as the problem of deciding whether the existence of a factorization for a Schur product channel is equivalent to the existence of a matrix factorization for the same channel \cite{Hup-factor,dykema,ozawa}. So it is not unreasonable to believe that determining whether or not a given channel admits a matrix factorization, hence whether matrix dilations exist, is a hard problem in general. However, given that there are wide classes of channels for which the existence of factorizations can be guaranteed/excluded, an obvious avenue for further work is to find more necessary or sufficient conditions for the existence of matrix factorizations. \\
A related problem is the project of classifying the matrix algebras by which a given channel can be factorized. Consider the completely depolarizing channel on $2\times 2$ matrices:
$$q(X) = \tr(X) I_2.$$
There are (at least) two non-isomorphic matrix algebras through which $q$ factors: $\mathcal{N}_1 = \C^{2\times 2}$ and $\mathcal{N}_2 = (\C\oplus \C \oplus \C \oplus \C, \tr_4)$ where $\tr_4$ is the usual normalized trace on the space of $4\times 4$ matrices. To see this, let $U_1 = \sum_{i,j=1}^2 E_{ij}\otimes E_{ji}$. The blocks $u_{ij}$ are just the matrix units for $\C^{2\times 2}$, so $U\in \C^{2\times 2}\otimes \C^{2\times 2}$. In fact $U_1$ is the tensor-swap matrix which we have seen above: $U_1(x\otimes y) = y\otimes x$ for any $x,y\in \C^2$, and hence $U_1(A\otimes B)U_1^* = B\otimes A$ for any $A,B\in \C^{2\times 2}$. Hence $$\tr_B U(X\otimes I_2)U^* = \tr_B I_2 \otimes X = \tr(X) I_2 =q(X).$$
On the other hand, let $U_2 = \sum_{i=1}^4 \sigma_i \otimes E_{ii}$ for $E_{ii} \in \bigoplus_{i=1}^4 \C$, and $\{\sigma_i\}_{i=1}^4$ are the Pauli matrices. $U_2$ is unitary since each $\sigma_i$ is unitary, and $U_2 \in\C^{2\times 2}\otimes \mathcal{N}_2$. Finally,
$$\tr_B U(X\otimes I_4)U^* = \sum_{i=1}^4 \tr(E_{ii}) \sigma_i X\sigma_i^* = \frac{1}{4}\sum_{i=1}^4 \sigma_i X \sigma_i^*.$$
A simple matrix calculation confirms that this does indeed yield $q(X)$. \\
This example shows the potential difficulty in trying to classify which algebras can be used for the factorization of a given channel: we have two inequivalent factorizations by means of algebras with the same (minimal) dimension. A better understanding of which algebras can be used in factorizations is therefore an obvious direction for future work. \\

\paragraph{\bf Acknowledgements: \rm } JL and RTWM acknowledge support from the National Research Foundation of South Africa, NRF CPRR grant number $90551$.

\end{document}